\documentclass[11pt,reqno,oneside]{amsart}

\usepackage[margin=2cm]{geometry}

\usepackage{amsmath,amssymb,amscd,amsthm}
\usepackage{mathrsfs}
\usepackage{mathtools}
\usepackage{enumerate}
\usepackage{cite}
\usepackage{comment}
\usepackage[colorlinks,pdfstartview=FitB]{hyperref}
\hypersetup{allcolors=blue}
\usepackage{graphicx}
\usepackage{xcolor}
\usepackage{float}

\usepackage{tikz}
\usetikzlibrary{calc,
	matrix,
	fadings,
	fit,
	math,
	arrows.meta}
\usepackage{pgfplots}
\pgfplotsset{compat=1.18}

\usepackage{bbm}

\def\idty{\mathbbm{1}} 

\newcommand{\Z}{{\mathbb Z}}

\newcommand{\C}{{\mathbb C}}

\newcommand{\N}{{\mathbb N}}

\newcommand{\CH}{{\mathcal{H}}}

\newcommand{\CL}{{\mathcal{L}}}
\newcommand{\CM}{{\mathcal{M}}}

\newcommand{\CU}{{\mathcal{U}}}

\newcommand{\Span}{{\mathrm{span}}}

\newtheorem{theorem}{Theorem}[section]
\newtheorem{lemma}[theorem]{Lemma}

\newtheorem{coro}[theorem]{Corollary}

\newtheorem{remark}[theorem]{Remark}

\theoremstyle{definition}

\allowdisplaybreaks \numberwithin{equation}{section}

\DeclareMathOperator{\supp}{supp}

\newcommand{\restr}[2]{#1\big|_{#2}}

\makeatletter
\def\subsection{\@startsection{subsection}{2}%
	\z@{.5\linespacing\@plus.7\linespacing}{.5\linespacing}%
	{\normalfont\scshape\centering}}
\makeatother

\hypersetup{
	pdftitle = {Absence of ballistic transport in quantum walks},
	pdfauthor = {Houssam Abdul-Rahman, Thomas A. Jackson,
		Yousef Salah},
	pdfsubject = { },
	pdfkeywords = {Ballistic transport, Localization,  Quantum walk, CMV matrix}
}

\makeatletter
\renewcommand{\section}{\@startsection
  {section}{1}{\z@}%
  {-2.5ex \@plus -1ex \@minus -.2ex}%
  {1ex \@plus .2ex}%
  {\normalfont\Large\bfseries}}

\renewcommand{\subsection}{\@startsection
  {subsection}{2}{\z@}%
  {-2ex \@plus -1ex \@minus -.2ex}%
  {0.8ex \@plus .2ex}%
  {\normalfont\large\bfseries}}
\makeatother

\begin{document}

\title[Zero-velocity in Asymptotically Reflecting Quantum Walks]{Absence of Ballistic Transport in Quantum Walks with Asymptotically Reflecting Sites}
		
\author[H.\ Abdul-Rahman]{Houssam Abdul-Rahman}
\address{[H.\ Abdul-Rahman] Department of Mathematical Sciences, United Arab Emirates University, AL Ain, UAE}
\email{\href{mailto:houssam.a@uaeu.ac.ae}{houssam.a@uaeu.ac.ae}}


\author[T. Jackson]{Thomas A. Jackson}
\address{[T. Jackson] Department of Mathematical Sciences, United Arab Emirates University, AL Ain, UAE}
\email{\href{mailto:andjackson@uaeu.ac.ae}{andjackson@uaeu.ac.ae}}

\author[Y. Salah]{Yousef Salah}
\address{[Y. Salah] Department of Mathematical Sciences, United Arab Emirates University, AL Ain, UAE}
\email{\href{mailto:202050242@uaeu.ac.ae}{202050242@uaeu.ac.ae}}

\keywords{}
\begin{abstract}
We prove general sufficient conditions for zero velocity in position dependent one-dimensional quantum walks, and hence for the absence of ballistic transport. Our starting point is a general a priori upper bound on the velocity, formulated in terms of sparse bi-infinite sequences of sites, their gap structure, and the corresponding local coin parameters. This estimate yields several deterministic criteria for zero velocity that depend only on the behavior of suitable coin entries along selected subsequences and are independent of the values of the coins elsewhere. We also discuss the random case as an application of the general approach. All of our results remain valid in the CMV setting.
\end{abstract}
\maketitle



%

\allowdisplaybreaks

\tableofcontents
\section{Introduction}
Discrete-time quantum walks constitute a natural unitary framework for studying transport on discrete structures, and they are of interest both because of their mathematical richness and because of their relevance in quantum science. Beyond their role as quantum analogues of random walks \cite{QD-1}, they arise in quantum algorithms, computation, and simulation \cite{Portugal2013,App-QGates1,App-QS1}. At the same time, recent controlled realizations, particularly in photonic platforms, have made them an experimentally accessible setting for studying transport and related phenomena \cite{Karski2009,Esposito2022,Zhou2024}. They have also found important applications in topological phases of driven systems \cite{Kitagawa2010}. In one dimension, they furnish a particularly tractable framework for spectral and dynamical questions concerning propagation \cite{BHJ2003,ahlbrechtAsymptoticEvolutionQuantum2011}.

Depending on the structure of the system, one may observe markedly different
transport regimes. Periodic settings are typically associated with ballistic transport
\cite{ARS2023-CMP,ARCSW25,CFLOZ, damanikQuantumDynamicsPeriodic2015, Nguyen2020, ewalks},
static disordered settings often exhibit localization or strong suppression of
transport \cite{BHJ2003,HamzaJoyeStolz2006,hamzaLyapunovExponentsUnitary2007,HJS2009,HJ14,AschJoye2019,locQuasiPer},
and quasi-periodic settings may display intermediate behavior, including anomalous
or diffusive-type transport \cite{Nguyen2019,Wojcik}.
 At the spectral level, ballistic transport is generally linked to absolutely continuous spectrum, whereas localization is tied to pure point spectrum; see, e.g., \cite{KumarSabri2026,CedzichFillmanVelazquez2026}. Moreover, one-dimensional coined quantum walks admit a natural representation in terms of CMV matrices and in the broader class of generalized extended CMV matrices, thereby linking quantum-walk dynamics with the spectral theory of orthogonal polynomials on the unit circle \cite{canteroFivediagonalMatricesZeros2003,Simon2005OPUC2,canteroMatrixvaluedSzegoPolynomials2010,CGMV2012QIP,CFLOZ, joye2026dynamical, Simon2007CMV}.

In the present work, we study general one-dimensional inhomogeneous split-step quantum
walks and seek mechanisms ensuring the absence of ballistic transport. Our main
dynamical quantity is the velocity, which measures the long-time propagation rate of
wave packets \cite{ARFFW,ARS2023-CMP,ADFS2024,ARCSW25}; positive velocity corresponds
to ballistic propagation, whereas zero velocity rules out linear spreading.

The basic motivating picture already appears in the simpler shift-coin setting
$U=SC$, where $S$ and $C$ denote the shift and coin operators. A perfect reflector
corresponds to a coin whose $(1,1)$-entry vanishes, so that the incoming amplitude is
fully reflected rather than transmitted. By contrast, a perfect transmitter
corresponds to a trivial coin action, for instance $C=\idty_2$, in which case no
mixing occurs and the wave is fully transmitted. If two perfect reflectors are placed on the line, then any state initially supported
between them remains trapped in the cavity they form and cannot propagate outside it.
More generally, a bi-infinite sequence of perfect reflectors decomposes the line into
finite cavities, and therefore prevents ballistic transport. This naturally leads to
the question of whether the same mechanism persists when the perfect reflectors are
replaced by imperfect reflectors converging, in both directions, to a perfect reflector.

The main results of the paper identify a deterministic mechanism for the
absence of ballistic transport in this setting. Theorem~\ref{Thm:general} gives a
general upper bound on the velocity in terms of the decay of the relevant coin
entries along a bi-infinite sequence $(j_m)_{m\in\Z}\subset\Z$ and the geometry of the gaps
between consecutive sites in that sequence. Theorem~\ref{Thm:determ} turns this
general bound into explicit zero-velocity criteria in several natural regimes.
In particular, sufficiently sparse families of sufficiently strong reflectors force
vanishing velocity.

A key feature of the result is that it depends only on the behavior of the coins along the
distinguished sequence $(j_m)_{m\in\Z}$, and is otherwise insensitive to the values of the
coins away from that set. In this sense, the criterion is local and sparse in nature. The zero velocity conclusion rules out ballistic
transport, although it leaves open the question about slower transport or localization. 
As an application of the deterministic framework, we also obtain a simple almost-sure
zero-velocity criterion in an i.i.d.\ random setting.

To the best of our knowledge, there is no previous work establishing deterministic
zero-velocity criteria, or equivalently the absence of ballistic transport, for
bi-infinite quantum walks of the present local sparse-reflector type.

The main novelty is not only in the statements, but also in the methods behind them. We develop a new method for bounding transport, based on the introduction of a modified position operator tailored to a sparse decomposition of the walk into cavities. This allows one to replace the standard position observable by a block-scalar operator that commutes with the inactive part of the dynamics and reduces the problem to estimating only the interfacial couplings across the chosen sparse almost reflector sites. The resulting a priori bound is therefore subsequence-driven and insensitive to the coin values away from that subsequence. To the best of our knowledge, this modified-position-operator framework is new in the present quantum-walk/CMV setting and provides the structural reason that the absence of ballistic transport can be proved from such limited local information.

It is also natural to ask whether our approach can be applied to quasi-periodic
models, such as unitary analogues of the almost-Mathieu operator or Fibonacci quantum
walks; see, e.g.,
\cite{CFO1,FOZ2016,CFGW2020LMP,WangDamanik2019,ZhangPiao2025,damanikSpreadingEstimatesQuantum2016,Nguyen2020,Nguyen2019}.
Indeed, this was one of the motivations for the present work. However, we were unable to determine how to adapt the methods developed here to prove or disprove vanishing velocity in these settings. This may indicate that, in quasi-periodic models, the relevant mechanism is more global and interference-driven, and is therefore not immediately accessible to our present approach.

Although our focus here is on the absence of ballistic transport rather than on the spectral classification of the model, the present setting is closely related to the literature on sparse barrier models, where widely separated barriers have a decisive influence on both transport and spectral type. Much of this literature is also motivated by the appearance, and in some cases the construction, of singular continuous spectrum. On the unitary side, the closest connection is with the half-line CMV model with sparse growing barriers studied by Damanik et al.~\cite{DamanikEtAl2016}. Analogous sparse-barrier mechanisms have also been studied for self-adjoint Schr\"odinger operators, beginning with Pearson \cite{Pearson1978} and continuing in works such as Simon--Stolz \cite{SimonStolz1996}, Zlato\v{s} \cite{Zlatos2004}, and Tcheremchantsev \cite{Tcheremchantsev2004}. From a more physical perspective, our setting is also related to the defect-scattering literature for one-dimensional quantum walks, where localized inhomogeneities are known to induce reflection, trapping, and suppression of transport \cite{LiIzaacWang2013,CanteroGrunbaumMoralVelazquez2010,WojcikEtAl2012,ZhangXueTwamley2014}.

The paper is organized as follows. The next section introduces the general setting, including the motivating example of a sequence of perfect reflectors, and states the main deterministic results, Theorems~\ref{Thm:determ} and \ref{Thm:general}, together with the application to the disordered case given in Corollary~\ref{coro:random}. The proofs occupy the remainder of the paper. In particular, Section~\ref{sec:properties} introduces the key idea of a modified position operator. Section~\ref{sec:pf-general} contains the proof of Theorem~\ref{Thm:general}, while Sections~\ref{sec:zero-velocity-1-2} and \ref{sec:pf:case3} are devoted to the proof of Theorem~\ref{Thm:determ}. Finally, Section~\ref{sec:pf-random} treats the random case.

\section*{Acknowledgments}
The authors would like to thank Christopher Cedzich, Jake Fillman, Darren Ong, and Mostafa Sabri for helpful and stimulating discussions.\\
H. A.-R. was supported in part by the UAE University under grant number 12S161-G00004622.

\section{Setting and main results}

We consider the dynamics of discrete-time quantum walks on the one-dimensional
lattice. The walk acts on the Hilbert space
\begin{equation}
  \mathcal{H} = \ell^2(\Z)\otimes \C^2 .
\end{equation}
Let $\{|j\rangle : j\in\Z\}$ and $|+\rangle = \begin{bmatrix}1\\[0.2em]0\end{bmatrix},\ 
  |-\rangle = \begin{bmatrix}0\\[0.2em]1\end{bmatrix}$ be the canonical orthonormal basis of
$\ell^2(\Z)$ and $\C^2$, respectively. The corresponding standard basis of
$\mathcal{H}$ is then given by
$\delta_j^\pm := |j\rangle \otimes |\pm\rangle,\ j\in\Z$.

We study the general \emph{split-step} walks on $\CH$, of the form
\begin{equation}\label{def:W}
W=S_+ C_1 S_- C_2
\end{equation}

where $C_1$ and $C_2$ are coin operators acting locally on the internal (coin) degrees of
freedom and $S_\pm$ are conditional shifts operators. More precisely:

\begin{enumerate}[(i)]
\item The coin operators $C_1$ and $C_2$ are specified by a bi-infinite sequences $a_1, a_2:\Z\to\C$
\begin{equation}\label{def:a}
  a_k = (a_k(n))_{n\in\Z}
  = (\ldots, a_k(-2),a_k(-1),a_k(0),a_k(1),a_k(2),\ldots), \text{ for }k=1,2,
\end{equation}
and corresponding sequences $b_1, b_2:\Z\to\C$, such that at position
$n\in\Z$ the coins act as the $2\times 2$ unitary matrix
\begin{equation}\label{def:C}
  C_k(n) =
  \begin{bmatrix}
    a_k(n) & b_k(n)\\[0.2em]
   -\overline{b_k(n)} & \overline{ a_k(n)}
  \end{bmatrix},
  \qquad
  |a_k(n)|^2 + |b_k(n)|^2 = 1.
\end{equation}
Thus $C_k$ is the direct sum of the local coins $C_k(n)$ in the position
representation.

\item The shift operators $S_\pm$ on $\mathcal{H}$ are defined by
\begin{equation}\label{def:S}
  S_+ = T \otimes |+\rangle\langle +|
      + \idty_\Z \otimes |-\rangle\langle -|\ \text{and}\ S_- = \idty_\Z\otimes |+\rangle\langle +|
      + T^{-1} \otimes |-\rangle\langle-|
\end{equation}
where $T^{\pm1}$ denotes the translation-by-one operator on $\ell^2(\Z)$,
$T^{\pm 1}|j\rangle = |j\pm 1\rangle,  j\in\Z$.
\end{enumerate}
When $C_1=\idty_\CH$ in \eqref{def:W}, then the quantum walk reduces to a \emph{shift-coin} walk $U=SC$ where $S:=S_+ S_-$.

By identifying $\ell^2(\Z)\otimes\C^2$ with $\ell^2(\Z)$ through the mapping
\begin{equation}\label{def:identify}
\delta_j^+\mapsto \delta_{2j-1} \text{ and }\delta_j^-\mapsto \delta_{2j},
\end{equation}
the split-step walk $W=S_+ C_1 S_- C_2$ can be regarded as a \emph{generalized extended CMV} matrix 
\begin{equation}\label{def:CMV}
W\equiv \mathcal{L}\mathcal{M},\footnote{``$\equiv$'' means that there exists a unitary $\CU$ such that $\CU W\CU^*=\CL\CM$. In this case, $\CU:\ell^2(\Z)\otimes \C^2\to\ell^2(\Z)$ defined by \eqref{def:identify}.} \text{ where } \CL:=\CL(a_1,b_1)=\bigoplus_{n\in\Z}\Theta_1(n), \qquad \CM:=\CM(a_2,b_2)=\bigoplus_{n\in\Z}\Theta_2(n).
\end{equation}
Here $\Theta_k(n):=\sigma_x C_k(n)$ denotes the $n$-th $2\times2$ block in the corresponding direct sum, and
$\sigma_x=\begin{bmatrix}0 & 1\\1 & 0\end{bmatrix}$
is the Pauli $x$-matrix.
More precisely, $\mathcal{L}$ is block-diagonal with respect to the decomposition
$
\ell^2(\Z)=\bigoplus_{n\in\Z}\Span\{\delta_{2n},\delta_{2n+1}\}$,
whereas $\mathcal{M}$ is block-diagonal with respect to
$\ell^2(\Z)=\bigoplus_{n\in\Z}\Span\{\delta_{2n-1},\delta_{2n}\}$.
The latter is exactly the decomposition associated with the lattice sites in the quantum-walk picture; in particular,
$\mathcal{M}\equiv \sigma_x^{\oplus\Z} C_2$, see, e.g., \cite{ARCSW25}.

\begin{remark}
Consider the standard notation of  the generalized CMV blocks
\begin{equation}
\Theta(\alpha,\rho):=
\begin{bmatrix}
\overline{\alpha} & \overline{\rho}\\
\rho & -\alpha
\end{bmatrix},
\qquad |\alpha|^2+|\rho|^2=1.
\end{equation}
Since
\begin{equation}
\Theta_k(n)=\sigma_x C_k(n)=
\begin{bmatrix}
-\overline{b_k(n)} & \overline{a_k(n)}\\
a_k(n) & b_k(n)
\end{bmatrix}
=\Theta\bigl(-b_k(n),a_k(n)\bigr),
\end{equation}
each coin $C_k(n)$ gives rise to the Verblunsky pair
$\bigl(-b_k(n),a_k(n)\bigr)$.
With the identification \eqref{def:identify}, the blocks of $\mathcal L$
act on $\Span\{\delta_{2n},\delta_{2n+1}\}$, corresponding to the even
CMV indices, whereas the blocks of $\mathcal M$ act on
$\Span\{\delta_{2n-1},\delta_{2n}\}$, corresponding to the odd ones.
Therefore,
\begin{equation}\label{def:Verblunsky}
(\alpha_{2n},\rho_{2n})=\bigl(-b_1(n),a_1(n)\bigr),
\qquad
(\alpha_{2n-1},\rho_{2n-1})=\bigl(-b_2(n),a_2(n)\bigr),
\qquad n\in\Z.
\end{equation}
In other words, the even Verblunsky pairs come from the $C_1$-coins,
while the odd Verblunsky pairs come from the $C_2$-coins.
\end{remark}

\subsection{Motivation: a simple case}\label{sec:motivation}
To motivate the main results, consider the shift-coin walk $U=SC$ on $\mathcal{H}$, where $C$ is specified by the bi-infinite sequence $a:\Z\to\C$.
If for some $\ell\in\Z$ we have $a(\ell) = 0$, that is, up to phases on the off diagonal entries,
\begin{equation}
  C(\ell) =
  \begin{bmatrix}
    0 & 1\\[0.2em]
   -1 & 0
  \end{bmatrix},
\end{equation}
then the site $\ell$ acts as a \emph{perfect reflector} in the sense
that a state cannot propagate from position $\ell-1$ to $\ell+1$, or vice
versa. This is a consequence of the following simple observation.

First, the shift operator $S$ expels any amplitude from position $\ell$:
\begin{equation}\label{U:mirror-1}
  SC\ \delta_\ell^+ = -\delta_{\ell-1}^-,
  \qquad
  SC\ \delta_\ell^- = \delta_{\ell+1}^+.
\end{equation}
Moreover, a state that arrives at position $\ell$ from either side is sent
back after two steps of the walk (see also Figure \ref{fig:mirror}):\footnote{Observe that
$
(SC)^2\delta_{\ell-1}^+ = SCS\left(\alpha\delta_{\ell-1}^+ +\beta\delta_{\ell-1}^-\right), \text{ where }|\alpha|^2+|\beta|^2
=1$.
Under the action of $SCS$, the component $\delta_{\ell-1}^-$ is propagated to the left. Therefore, we restrict attention to the action of $SCS$ on $\delta_{\ell-1}^+ $ and $\delta_{\ell+1}^-$, as these are the components that are mapped to the reflector at site $\ell$.}
\begin{equation}\label{U:mirror-2}
  S C S\,\delta_{\ell-1}^+ = -\delta_{\ell-1}^-,
  \qquad
  S C S\,\delta_{\ell+1}^- = -\delta_{\ell+1}^+.
\end{equation}

In both cases the state is reflected, up to a global phase.
\vspace{-0.4cm}
\begin{figure}[H]\label{fig:mirror}
\begin{center}
\includegraphics[width=1.5in]{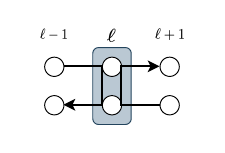}
\vspace{-0.7cm}
\caption{Each vertical pair of circles represents a site of the walk together with its two internal degrees of freedom. The upper circle corresponds to $|+\rangle$, while the lower circle corresponds to $|-\rangle$. The figure shows the sites $\ell-1$, $\ell$ (boxed), and $\ell+1$.}
\end{center}
\end{figure}

Consequently, if two such reflectors are present, then any state initially
supported between them remains trapped in the corresponding finite region
as time evolves. On the other hand, if the initial state is supported
outside this \emph{cavity}, then the state propagates away from the nearest
reflector with positive velocity. We define the \emph{maximal velocity} \footnote{Strictly speaking, $v(W)$ is a \emph{maximal upper asymptotic velocity}: maximal because of the supremum over normalized states, upper because of the $\limsup$, asymptotic because it describes the regime $t\to\infty$, and velocity because it measures a large-time growth rate per unit time. Throughout the paper, however, we will use the shorter term \emph{maximal velocity}, or simply \emph{velocity}.} of a quantum walk generated by a unitary
operator $W$ on $\mathcal H$ as, see, e.g. \cite{ARCSW25}
\begin{equation}\label{def:v}
  v(W)
  = \sup_{\substack{\psi\in D(Q)\\ \|\psi\|=1}}
    \limsup_{t\to\infty}\frac{1}{t}\bigl\|Q\,W^t \psi\bigr\|,
\end{equation}
where $Q$ is the position operator on $\mathcal H$,
\begin{equation}\label{def:Q}
  Q = \sum_{j\in\Z} j\, |j\rangle\langle j|\otimes \idty_2,
\end{equation}
and $D(Q)$ is the domain of $Q$ consisting of all $\psi\in\mathcal{H}$ such that $\|Q\psi\|_\mathcal{H}<\infty$.

 Since the maximal velocity \eqref{def:v} is
defined as a supremum over all initial states in $D(Q)$,
the maximal velocity of $U=SC$ has to vanish, i.e., $v(U)=0$ when there is an (infinite) sequence of
reflectors in both spatial directions.

This setting inspires a natural decomposition of the Hilbert space, that will be crucial for most of our proofs, as follows. 
Suppose that the bi-infinite sequence of perfect reflectors are at sites $(j_m)_{m\in\Z}\subset\Z$ such that
\begin{equation}\label{def:j}
\ldots<j_{-1}<j_0<j_1<\ldots.
\end{equation}
We identify $\mathcal{H} = \ell^2(\Z)\otimes \C^2$
with $\ell^2(\Z)$ via \eqref{def:identify}. Under which, each pair of consecutive perfect reflectors at
$j_m$ and $j_{m+1}$ defines a finite \emph{cavity} in $\ell^2(\Z)$. For
$m\in\Z$ we set
\begin{equation}
  \mathcal{H}_m
  = \Span\{\delta_n : n = 2j_m,2j_m+1,\dots,2j_{m+1}-1\}.
\end{equation}
Then $ \ell^2(\Z)$ decomposes naturally as (see Figure \ref{fig:dec})
\begin{equation}\label{dec}
  \ell^2(\Z) = \bigoplus_{m\in\Z} \mathcal{H}_m.
\end{equation}
This is an orthogonal decomposition into finite-dimensional subspaces
corresponding to the cavities between consecutive perfect reflectors.
\begin{figure}[H]\label{fig:dec}
\begin{center}
\includegraphics[width=4in]{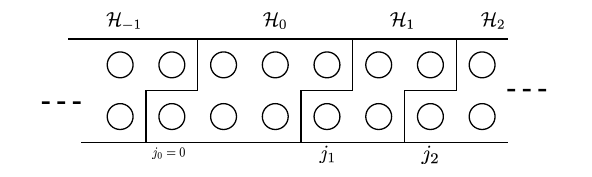}
\vspace{-0.5cm}
\caption{Schematic decomposition of $\ell^2(\mathbb{Z})$ into the subspaces $\mathcal{H}_m$ determined by the increasing sequence $(j_m)_{m\in\mathbb{Z}}$. Each vertical pair of circles represents the two basis vectors associated with a lattice site: the upper circle corresponds to $\delta_{2j_k-1}$, and the lower circle to $\delta_{2j_k}$. The indices $j_0=0$, $j_1$, and $j_2$ mark successive interface sites separating neighboring subspaces, with $j_0=0$ chosen as the origin.}
\end{center}
\end{figure}

By relabeling the indices, we may assume without loss of
generality that $j_0 = 0$, so that no block $\mathcal{H}_k$ contains
both positive and negative positions. 

In the CMV setting, $U=SC\equiv \CL\CM$, where $\CL\equiv S\sigma_x^{\oplus\Z}$, and $\CM\equiv\sigma_x^{\oplus\Z}C$ \footnote{In both $\CL$ and $\CM$, the direct sum $\sigma_x^{\oplus\Z}$ is compatible with that of $C$ (or $\CM$)}. Both of these operators are block diagonal with respect to \eqref{dec}. That is $U$ decouples into a direct sum of finite blocks. Hence, the system trivially exhibit all the usual manifestations of localization.

Here we give a quantum walk argument that shows that $v(U)=0$.
The set of compactly supported normalized states is dense in $D(Q)$, hence
\begin{equation}
  v(U)
  = \sup_{\substack{\supp(\psi_c)\ \text{compact} \\ \|\psi_c\|=1}}
    \limsup_{t\to\infty}\frac{1}{t}\bigl\|Q U^t \psi_c\bigr\|.
\end{equation}
Fix a compactly supported initial state $\psi_c$. Then there exists
$s\in\N$ such that
\begin{equation}
  \supp(\psi_c)\subset
  \mathbb{H}_s := \bigoplus_{-s\leq k\leq s}\mathcal{H}_k.
\end{equation}
Because $\CH_k$ are invariant under  $U^t$ (see \eqref{U:mirror-1} and \eqref{U:mirror-2}), then we also have
$\supp(U^t\psi_c)\subset \mathbb{H}_s$ for all $t\in\N$.
It follows that
\begin{equation}
  \frac{1}{t}\bigl\|Q U^t \psi_c\bigr\|
  \leq \frac{1}{t}\bigl\|\restr{Q}{\mathbb{H}_s}\bigr\| \leq \frac{1}{t}\max\{j_{s+1}, |j_{-s}|\}\to 0 \text{ as } t\to\infty.
\end{equation}
Because this holds for every compactly supported normalized $\psi_c$,
we conclude that $v(U)=0$.

In fact, the idea of bi-infinite sequence of perfect reflectors is valid in the more general split-step setting, and the corresponding generalized CMV matrices. 
\begin{lemma}\label{lem:case-a=0}
Consider the quantum walk $W = S_+ C_1 S_- C_2$ defined by the coins \eqref{def:C} and shifts
\eqref{def:S}, where the coins $C_k$ are associated with the sequences
$(a_k(n))_{n\in\Z}$ for $k=1,2$ as in \eqref{def:a}. For a fixed $k\in\{1,2\}$, suppose there exists a bi-infinite subsequence
$(a_k(j_m))_{m\in\Z}$ of zeros with strictly increasing indices as in \eqref{def:j}.
Then the maximal velocity $v(W)$ defined in \eqref{def:v} is zero, i.e.\ $v(W)=0$.
\end{lemma}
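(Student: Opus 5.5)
We mimic the shift-coin argument of Section~\ref{sec:motivation} in the CMV picture $W\equiv\CL\CM$ from \eqref{def:CMV}. The plan is to show that $W$ leaves invariant a decomposition of $\ell^2(\Z)$ into finite-dimensional subspaces. From this, $v(W)=0$ follows exactly as in the shift-coin case: compactly supported states are dense in $D(Q)$, and such a state lies in a finite direct sum $\mathbb{H}_s$ of blocks. Since $\mathbb{H}_s$ is invariant under $W^t$, we get $\|QW^t\psi_c\|\le \|\restr{Q}{\mathbb{H}_s}\|$, which is bounded independently of $t$.

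The key step is to identify the cavities. If $a_k(n)=0$, then $\Theta_k(n)=\Theta(-b_k(n),0)$ is diagonal, with unimodular entries $-\overline{b_k(n)}$ and $b_k(n)$. So the corresponding $2\times2$ block of $\CL$ (if $k=1$) or $\CM$ (if $k=2$) splits into two $1\times1$ blocks.

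\textbf{Case $k=2$.} The blocks of $\CM$ act on $\Span\{\delta_{2n-1},\delta_{2n}\}$. For $n=j_m$ this block decouples $\delta_{2j_m-1}$ from $\delta_{2j_m}$. Set
\[
\mathcal{K}_m=\Span\{\delta_\ell:2j_m\le\ell\le 2j_{m+1}-1\}.
\]
Then $\CM$ leaves each $\mathcal{K}_m$ invariant, since every other $\CM$-block $\{2n-1,2n\}$ lies entirely within a single $\mathcal{K}_m$. The operator $\CL$ has blocks $\{2n,2n+1\}$, and these never straddle the boundary between $2j_{m+1}-1$ and $2j_{m+1}$, because that pair has the form $\{\text{odd},\text{even}\}$ with the even index larger. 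Hence $\CL$ also leaves each $\mathcal{K}_m$ invariant, and so does $W$.

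\textbf{Case $k=1$.} Here the blocks of $\CL$ on $\{2j_m,2j_m+1\}$ decouple. Take
\[
\mathcal{K}_m=\Span\{\delta_\ell:2j_m+1\le \ell\le 2j_{m+1}\}.
\]
The $\CM$-blocks $\{2n-1,2n\}$ are never cut by these boundaries, and the remaining $\CL$-blocks lie inside a single $\mathcal{K}_m$.

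In both cases the $\mathcal{K}_m$ are finite-dimensional, since $j_{m+1}-j_m<\infty$, and they give an orthogonal decomposition $\ell^2(\Z)=\bigoplus_m\mathcal{K}_m$. The decomposition is bi-infinite, so any compact set is covered by finitely many blocks. Pulling back via $\CU$ from \eqref{def:identify}, the position operator restricted to $\bigoplus_{|m|\le s}\mathcal{K}_m$ has norm at most $\max\{|j_{-s}|,|j_{s+1}|\}+1$.

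The main obstacle is only bookkeeping: checking carefully which index pairs the blocks of $\CL$ and $\CM$ act on, so that the cut points are genuinely not straddled by any block of the other factor. It also needs checking that the unitary conjugation by $\CU$ does not affect the velocity, since $\CU Q\CU^*$ is diagonal with entries differing from $\lceil \ell/2\rceil$ by a bounded amount. No analytic difficulty arises beyond this.
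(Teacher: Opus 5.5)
Your proof is correct, but it follows a different route from the paper. The paper proves nothing directly here: it obtains the lemma as the degenerate instance $a_k(j_m)=0$ of case (\ref{Thm:case-3}) of Theorem~\ref{Thm:determ}. That proof bounds $v(W)\le\|[\hat Q_N,\CM]\|$ for a block-scalar operator $\hat Q_N$ that commutes with the block-diagonal $\CL$. The commutator consists only of the couplings $a_2(j_m)$ across cavity boundaries, so it vanishes. The case $k=1$ is reduced to $k=2$ through the cyclic invariance $v(\CL\CM)=v(\CM\CL)$ of Lemma~\ref{lem:com-velocity}, combined with conjugation by the translation $T$.

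You instead use exact invariance. You move the cut points by one index: $[2j_m,2j_{m+1}-1]$ for $k=2$ and $[2j_m+1,2j_{m+1}]$ for $k=1$. Then both $\CL$ and $\CM$ are reduced by the same finite-dimensional decomposition, and your parity bookkeeping checks out in both cases. This even avoids Lemma~\ref{lem:com-velocity}, which the paper says its sketched direct extension would need.

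Your route is elementary and gives strictly more. $W$ becomes a direct sum of finite unitary matrices, so $\sup_t\|QW^t\psi\|<\infty$ for every compactly supported $\psi$, and the spectrum is pure point with compactly supported eigenvectors. The paper's commutator route, by contrast, survives when the exact zeros are replaced by merely small entries with $\max\{|j_m|,g_m\}|a_k(j_m)|\to0$, which is the real content of the paper.

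One step you should make explicit. Density of compactly supported vectors does not by itself let you pass from them to all $\psi\in D(Q)$; you also need a bound that is uniform in $t$. Since $[Q,W]$ is bounded, expanding $[Q,W^t]$ gives
$\frac1t\|QW^t\chi\|\le\frac1t\|Q\chi\|+\|[Q,W]\|\,\|\chi\|$.
Apply this with $\chi=\psi-\psi_c$, where $\psi_c$ is compactly supported. This yields $\limsup_t\frac1t\|QW^t\psi\|\le\|[Q,W]\|\,\|\psi-\psi_c\|$, and the right-hand side can be made arbitrarily small; only $\ell^2$-density is needed. The paper's motivating shift-coin argument glosses over the same point.

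Finally, $\CU Q\CU^*$ is exactly the diagonal operator with entries $\lceil\ell/2\rceil$, not merely within a bounded amount of it. So no comparison argument is needed there.
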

While Lemma \ref{lem:case-a=0} could be established by a straightforward extension of the preceding argument together with Lemma \ref{lem:com-velocity} below (or by the sieving Theorem \cite[Theorem 3.3]{ARCSW25}), we omit that proof here, since the result is an immediate consequence of case (\ref{Thm:case-3}) of Theorem \ref{Thm:determ} below.

\subsection{Main results}
We now consider a natural extension of the simple model of a bi-infinite sequence of perfect reflector coins in $U=SC$. Instead of perfect reflectors, we allow \emph{imperfect} reflectors that approach a perfect reflector at infinity. Specifically, we assume the existence of a bi-infinite subsequence $(a(j_m))$ satisfying $a(j_m)\to 0$ as $|m|\to\infty$, and investigate when the maximal velocity \eqref{def:v} vanishes.

We note that  all of our results extend without change to the CMV setting. More precisely, suppose that the corresponding CMV representation is determined by the sequence of Verblunsky pairs $((\alpha_n,\rho_n))_{n\in\Z}$ as in \eqref{def:Verblunsky}. Then the conclusions of our results remain valid upon identifying $|a_1(n)|=|\rho_{2n}|$ and $|a_2(n)|=|\rho_{2n+1}|$ for all $n\in\Z$.

Our main theorem establishes vanishing of the maximal velocity under appropriate conditions on the growth of the gaps between successive $j_m$ 
\begin{equation}\label{def:g}
g_m :=
\begin{cases}
j_{m+1}-j_m, & m\geq 0 \\
j_m - j_{m-1}, & m<0,
\end{cases}
\end{equation}
and/or on the convergence rate of the reflectors.
\begin{theorem}\label{Thm:determ}
Let $W=S_+C_1S_-C_2$ be the split-step quantum walk associated with coins
$C_1$ and $C_2$ associated with $(a_1(n))_{n\in\Z}$ and $(a_2(n))_{n\in\Z}$ as in \eqref{def:a}.
Fix $k\in\{1,2\}$ and assume that there exists a strictly increasing bi-infinite sequence
$(j_m)_{m\in\Z}\subset\Z$ (cf.\ \eqref{def:j}) with $j_m\to\pm\infty$ as $m\to\pm\infty$ such that
\begin{equation}\label{cond-0}
|a_k(j_m)|\xrightarrow[|m|\to\infty]{}0.
\end{equation}
 If one of the following holds,
then the maximal velocity $v(W)$ defined in \eqref{def:v} vanishes, i.e.\ $v(W)=0$:
\begin{enumerate}[(i)]\itemsep0.25em

\item \label{Thm:case-1}\emph{Uniformly bounded gaps:}
\begin{equation}\label{cond-uniform-bound}
\sup_{m\in\Z} g_m < \infty .
\end{equation}

\item \label{Thm:case-2}\emph{Sublinear gaps with quantitative decay:}
\begin{equation}\label{cond-a-decay}
\quad
\frac{g_m}{|j_m|}\xrightarrow[|m|\to\infty]{}0  \quad\text{\ and \ }\quad |a_k(j_m)| = \mathcal{O}(|j_m|^{-1}).
\end{equation}

\item \label{Thm:case-3}\emph{Gap-weighted decay:}
\begin{equation}\label{cond-gap-weighted}
\max\{|j_m|,g_m\}\,|a_k(j_m)| \xrightarrow[|m|\to\infty]{} 0.
\end{equation}

\end{enumerate}
\end{theorem}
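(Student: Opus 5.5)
We will deduce all three cases from a general a priori velocity bound (the forthcoming Theorem~\ref{Thm:general}), built from a modified position operator adapted to the cavity decomposition \eqref{dec}. Fix $k$ and the sequence $(j_m)$, with $j_0=0$ after relabeling. Let $\tilde Q$ be the block-scalar operator that equals $j_m$, or a fixed representative position of the cavity, on $\mathcal H_m$. Then $\|(Q-\tilde Q)\psi\|\le \sup_m g_m$ on the relevant blocks. More precisely, $|Q-\tilde Q|\le g_m$ on $\mathcal H_m$, which is $o(|j_m|)$ under (ii) and bounded under (i).

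The first step is to write $W=W_0+R$, where $W_0$ is obtained by replacing the coin $C_k(j_m)$ with the perfect reflector having $a_k(j_m)=0$ (phases kept). By the argument of Section~\ref{sec:motivation}, $W_0$ is block diagonal with respect to \eqref{dec}, so $[W_0,\tilde Q]=0$. The perturbation $R$ is a direct sum of $2\times2$ pieces localized at the interfaces $j_m$, with $\|R|_{\text{site }j_m}\|\lesssim |a_k(j_m)|$.

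The second step is to expand $\tilde Q W^t=W^t\tilde Q+\sum_{s=0}^{t-1}W^{t-1-s}[\tilde Q,W]W^{s}$ and to use $[\tilde Q,W]=[\tilde Q,R]$. Each commutator piece at $j_m$ couples $\mathcal H_{m-1}$ to $\mathcal H_m$ and has norm at most $(\tilde q_m-\tilde q_{m-1})|a_k(j_m)|\lesssim g_m|a_k(j_m)|$. The difficulty is that this quantity need not be small, or even bounded, uniformly in $m$. I would handle this by truncation. Given $\varepsilon>0$, choose $M$ such that the weighted quantity is below $\varepsilon$ for $|m|>M$. Split $[\tilde Q,R]=A_M+B_M$, where $A_M$ collects the finitely many interfaces with $|m|\le M$ and $B_M$ the rest, so that $\|B_M\|\le\varepsilon$ (the pieces act on orthogonal supports). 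The $B_M$ part contributes at most $\varepsilon t$ to $\|\tilde QW^t\psi\|$. The $A_M$ part is bounded, since $\|A_M\|\le C_M$, but naively it still contributes $C_M t$.

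This is the main obstacle. To control the finitely many "bad" interfaces, I would instead change the operator rather than estimate: redefine $\tilde Q$ to be constant across the central window $[j_{-M},j_{M}]$, i.e.\ merge those cavities into one. Then the commutator has no interfaces with $|m|<M$. The cost is $\|Q-\tilde Q\|\lesssim \max(j_M,|j_{-M}|)+\sup_{|m|>M}g_m$ on the relevant region, and this is a constant independent of $t$. The remaining interfaces have $|m|\ge M$, and the commutator there is bounded by the weighted quantity. This gives the general bound
\[
v(W)\le \limsup_{M\to\infty}\ \sup_{|m|\ge M} w_m\,|a_k(j_m)|+\limsup_{t}\frac{1}{t}\sup_{|m|\text{ reached}}g_m,
\]
where $w_m$ is the jump of $\tilde Q$ across $j_m$. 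The key choice is $\tilde Q$ on $\mathcal H_m$ equal to $j_m$ (or $j_{m+1}$ for negative $m$), so that the jump is $w_m\approx g_m$ and the error $Q-\tilde Q$ is at most $g_m$ on $\mathcal H_m$.

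Finally I would specialize the general bound to each case. For (i), the error $Q-\tilde Q$ is uniformly bounded and the jumps are at most $\sup g_m$, so the commutator terms are $O(|a_k(j_m)|)\to0$ and $v=0$. For (iii), the jumps times $|a_k|$ tend to zero by \eqref{cond-gap-weighted}. The remaining error $\|(Q-\tilde Q)W^t\psi\|\le \sup g_m$ over the sites reached by time $t$ is controlled by first bounding the reached region by $|j_m|\lesssim t$ (finite propagation speed: $W$ moves at most two sites per step). Then $g_m|a|\to0$ together with $|j_m||a|\to0$ is used to absorb it, which is why both $|j_m|$ and $g_m$ appear. For (ii), the error $g_m=o(|j_m|)=o(t)$ on the reached region $|j_m|\le 2t+C$ gives a $o(t)$ contribution. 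The commutator needs $g_m|a_k(j_m)|\to0$, which holds because $g_m|a_k(j_m)|\le C g_m/|j_m|\to0$. I expect the delicate point throughout to be making the ballistic-cone restriction rigorous, via Lemma~\ref{lem:com-velocity}-type commutator estimates.
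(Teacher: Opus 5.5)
\textbf{Case (iii): this is a genuine gap.} With $\tilde Q=j_m$ on $\mathcal H_m$, you must show $\|(Q-\tilde Q)W^t\psi\|=o(t)$. Here $0\le Q-\tilde Q\le g_m$ on $\mathcal H_m$, and under \eqref{cond-gap-weighted} the gaps may grow arbitrarily fast relative to $j_m$. Finite propagation speed only confines $W^t\psi$ to $|n|\le t+C$, so it bounds $Q-\tilde Q$ by $t+C$ on the support. That is $O(t)$, not $o(t)$, because the packet may sit deep inside a cavity of length $\gg t$. Nothing in your argument lets the smallness of $g_m|a_k(j_m)|$ or $|j_m||a_k(j_m)|$ ``absorb'' this purely spatial error.

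Theorem~\ref{Thm:Q} does not rescue it either. The relative bound of $Q-\tilde Q$ is governed by $1-j_m/j_{m+1}$, which tends to $1$ when $g_m/j_m\to\infty$. This is also why the paper does not derive (iii) from Theorem~\ref{Thm:general}: that theorem needs $q<\infty$.

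The missing idea is to avoid the error term altogether. Outside a finite window, dominate $|Q|$ by the block-scalar \emph{upper} envelope
\[
\hat Q_N=\sum_{m\ge N}\bigl(j_{m+1}P_m+|j_{-m}|P_{-m}\bigr).
\]
Then $\|P_{\Z\setminus(-N,N)}QW^t\psi\|\le\|\hat Q_NW^t\psi\|$ holds exactly, and the window part is bounded by $\max\{j_N,|j_{-N}|\}$. Hence $v(W)\le\|[\hat Q_N,\CM]\|$. Its entries are $j_{N+1}|a_2(j_N)|$ at the window edge and, elsewhere, the gap \emph{following} $j_m$ times $|a_2(j_m)|$. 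All of these are bounded by $\sup_{m\ge N}j_{m+1}|a_2(j_m)|$, together with its negative-side analogue. Since $j_{m+1}=j_m+g_m$, this is exactly where both $|j_m|$ and $g_m$ enter. You could alternatively run a second commutator estimate for the gap operator $\sum_m g_mP_m$, but that is the same majorant argument in disguise.

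\textbf{Cases (i)--(ii): a real but fixable flaw.} As written, $\tilde Q$ is constant on the window $[j_{-M},j_M]$ and equal to $j_m$ outside. It therefore jumps by at least $\max\{j_M,|j_{-M}|\}$ at one window edge, so the commutator contains a term like $j_M|a_k(j_M)|$. That term is only $O(1)$ under (ii) and may be unbounded under (i), where no decay rate is assumed. So your displayed general bound does not give $v=0$ in these cases.

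The repair is to translate the outer values: take $\tilde Q=j_m-j_M$ on $\mathcal H_m$ for $m\ge M$, and analogously on the negative side. This removes the edge jump, every remaining jump is an adjacent gap, and $|Q-\tilde Q|\le\max\{j_M,|j_{-M}|\}+g_m$ on $\mathcal H_m$. Then (i) is a bounded perturbation. Case (ii) follows from relative bound zero of $\sum_m g_mP_m$ via Theorem~\ref{Thm:Q}, or from your cone argument on compactly supported states plus density, which uses boundedness of $[Q,W]$.

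This translated operator is a legitimate, slightly simpler alternative to the paper's $\tilde Q_N$, which takes the values $j_{m-N}$ on $\mathcal H_m$ and controls the relative bound through $j_{m-N}/j_{m+1}\to 1$.

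Finally, $W_0$ is block-diagonal for the cavity decomposition only when $k=2$. For $k=1$ you need the cut shifted by one index, or the reduction $v(\CL\CM)=v(T^{-1}\CM\CL T)$ via Lemma~\ref{lem:com-velocity}.
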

The proofs of cases (\ref{Thm:case-1}) and (\ref{Thm:case-2}) are based on the general bound established in Theorem \ref{Thm:general} and are presented in Section \ref{sec:zero-velocity-1-2}. The proof of case (\ref{Thm:case-3}) is independent of the arguments for cases (\ref{Thm:case-1}) and (\ref{Thm:case-2}) and is given in Section \ref{sec:pf:case3}.

We first make some general remarks on Theorem \ref{Thm:determ}, and then discuss each of the three cases in the theorem separately.
\begin{enumerate}[(A)]\itemsep3mm

\item The conclusion $v(W)=0$ should be interpreted as the absence of ballistic transport. This agrees with \cite[Def.~2.1]{damanikWhatBallisticTransport2024}, where a state $\psi$ is said to exhibit \emph{strong ballistic transport} when its maximal velocity is nonzero, and to exhibit \emph{ballistic transport in the norm-growth sense} when there exist constants $c,C>0$ such that
$ct \le \|QW^t\psi\| \le Ct$. 

Moreover, for any initial state with finite second moment, the probability of finding the
walker at a distance of order  $t$ from the
origin tends to zero as $t\to\infty$, for every initial state in $D(Q)$. Thus the
theorem excludes linear spreading of the quantum walk, although slower forms of transport
may still occur.
To make this precise, recall that the square of the velocity $v(W)$ can be written as
\begin{equation}\label{eq:velocity-moment}
(v(W))^2
=\sup_{\substack{\psi\in D(Q)\\ \|\psi\|=1}}
\limsup_{t\to\infty}\frac{1}{t^{2}}
\sum_{j\in\mathbb{Z}}\sum_{k=\pm} j^{2}\bigl|\langle \delta_j^k \mid W^{t}\psi\rangle\bigr|^{2}.
\end{equation}
For a fixed normalized vector $\psi\in D(Q)$ and $t\in\mathbb{N}$, define
\begin{equation}\label{eq:p-psi-t}
p_{\psi,t}(j):=\sum_{k=\pm}\bigl|\langle \delta_j^k \mid W^{t}\psi\rangle\bigr|^{2},
\qquad j\in\mathbb{Z}.
\end{equation}
Since $\sum_{j\in\mathbb{Z}}p_{\psi,t}(j)=1$, the function $p_{\psi,t}$ defines a
probability measure $\mathbb{P}_{\psi,t}$ on $\mathbb{Z}$ by $
\mathbb{P}_{\psi,t}(\{j\})=p_{\psi,t}(j)$.
We interpret $p_{\psi,t}(j)$ as the probability of finding the walker at site $j$ after
$t$ steps, given the initial state $\psi$.

Let $\mathcal{J}$ be the random variable on
$(\mathbb{Z},\mathcal{P}(\mathbb{Z}),\mathbb{P}_{\psi,t})$ defined by $\mathcal{J}(j)=j$, and let
$\mathbb{E}_{\psi,t}$ denote expectation with respect to $\mathbb{P}_{\psi,t}$. Then
\eqref{eq:velocity-moment} can be rewritten as
\begin{equation}\label{eq:vU-prob}
(v(W))^2
=\sup_{\substack{\psi\in D(Q)\\ \|\psi\|=1}}
\limsup_{t\to\infty}\mathbb{E}_{\psi,t}\!\left(\left(\mathcal{J}/t\right)^2\right).
\end{equation}

In particular, if $v(W)=0$, then for every normalized $\psi\in D(Q)$,
$\mathbb{E}_{\psi,t}\!\left(\left(\mathcal{J}/t\right)^2\right)\to 0$ as $t\to\infty$.
Hence, for every $v>0$, Chebyshev's inequality gives
\begin{equation}
\mathbb{P}_{\psi,t}\bigl(|\mathcal{J}|\geq vt\bigr)
\leq\frac{\mathbb{E}_{\psi,t}(\mathcal{J}^2)}{v^2t^2}
= \frac{1}{v^2}\,\mathbb{E}_{\psi,t}\!\left(\left(\mathcal{J}/t\right)^2\right)
\longrightarrow 0
\qquad \text{as } t\to\infty.
\end{equation}
Thus, asymptotically, the walker cannot be found with positive probability at a linear
distance $vt$ from the origin. This is precisely the absence of ballistic transport.
Of course, this does not rule out slower spreading, for instance transport of order
$t^\beta$ with $0\leq\beta<1$.
\item Theorem \ref{Thm:determ} provides general scenarios of absence of ballistic transport, depending solely on the existence of a bi-infinite sequence $(j_m)_{m\in\Z}$ along which the relevant coin entries vanish at infinity.
In particular, the conclusion is independent of the values of the coins on $\Z\setminus\{j_m:\ m\in\Z\}$.

The generality of Theorem \ref{Thm:determ} with respect to the coin values outside the distinguished index set $\{j_m:\,m\in\Z\}$ shows that the resulting bound is insensitive to possible interference effects generated by specific configurations of coins away from that set. In this sense, the criterion captures a mechanism for the absence of ballistic transport that depends only on suitably sparse local information and not on global oscillatory or resonant features of the full coin sequence.

 It is also worth emphasizing that our method yields bounds that do not take into account possible combined effects of the two coin sequences defining $C_1$ and $C_2$. That is, the argument treats the relevant decay information in a decoupled manner and therefore does not exploit cancellations or other interaction phenomena that may arise from the joint structure of $C_1$ and $C_2$.

In particular, Theorem \ref{Thm:determ} does not yield a conclusion for quantum-walk models whose inhomogeneity is generated by phases, either random or quasi-periodic. For example, \cite{CFGW2020LMP} proves singular continuity of the spectrum for magnetic split-step quantum walks, where the inhomogeneity is induced by quasi-periodic phases. Similarly, in \cite{JM10}, random inhomogeneity is implemented through i.i.d. phases. In both cases, however, the moduli of the associated transmission parameters are constant.

\item Theorem \ref{Thm:determ} gives sufficient conditions for the absence of ballistic transport, but it does not resolve the finer dynamical behavior in the cases covered by \eqref{Thm:case-1}--\eqref{Thm:case-3}. This naturally leads to the question of whether these hypotheses give rise to localization, subballistic transport, or some intermediate regime. It is also natural to ask for the corresponding spectral picture: which spectral types can occur under the assumptions of Theorem \ref{Thm:determ}, and how are they tied to the absence of ballistic transport? The RAGE theorem for unitary operators, see, e.g., \cite{KumarSabri2026}, offers a general mechanism connecting spectral decompositions with dynamical spreading, whereas \cite{CedzichFillmanVelazquez2026} shows that pure point spectrum forces zero maximal velocity. In view of this, a particularly interesting question is whether there exists a quantum walk with vanishing maximal velocity and purely absolutely continuous spectrum.
\end{enumerate}

Here are some remarks about the three cases in Theorem \ref{Thm:determ}. To simplify the discussion, we restrict attention to the
positive branch $j_m$, $m\geq 0$; the corresponding statements for the negative
branch $j_{-m}$ are entirely analogous.

\begin{enumerate}[]\itemsep3mm

\item \underline{Case (\ref{Thm:case-1}):} the gaps $g_m$ are uniformly bounded, so the almost
reflecting sites occur with bounded spacing. No quantitative decay rate is imposed on
the convergence $|a_k(j_m)|\to 0$: the mere presence of infinitely many such
sites at uniformly bounded distances is already sufficient to suppress
ballistic transport and hence to force $v(W)=0$.

\item \underline{Case (\ref{Thm:case-2}):} the bounded-spacing hypothesis is relaxed. The gaps
$g_m$ are allowed to diverge, but only sublinearly relative to $|j_m|$, so that
the spacing between successive almost reflecting sites remains negligible on the
scale of their distance from the origin. This weaker spacing assumption is
offset by the quantitative decay condition
$|a_k(j_m)|=\mathcal{O}(|j_m|^{-1})$, which ensures that the corresponding
sites still act as effective reflectors at large scales. The condition
$g_m/|j_m|\to 0$ is satisfied, for instance, when $j_m\sim m^\alpha$ for
$\alpha>0$, or more generally when $j_m\sim \exp(m^\beta)$ with
$\beta\in(0,1)$. By contrast, it fails for exponentially growing indices, for
example when $j_m\sim 2^m$, since in that case
\begin{equation}
\frac{g_m}{|j_m|}
\sim \frac{2^{m+1}-2^m}{2^m}=1.
\end{equation}

\item  \underline{Case (\ref{Thm:case-3}):} any a priori restriction on the growth of the
gaps is removed. In particular, the sequence $(j_m)$ may be exponentially sparse, or even
sparser. This additional sparsity is compensated by the stronger decay
assumption \eqref{cond-gap-weighted}.
Thus the relevant smallness of $a_k(j_m)$ is measured simultaneously against
 $j_m$ and against the local spacing to the next almost
reflecting site. 
When the gaps are large, the condition
$g_m|a_k(j_m)|\to 0$ becomes the decisive requirement: the greater the
separation between successive almost reflecting sites, the faster the
corresponding coin parameters must decay. For example, if $j_m\sim 2^m$, then
$g_m\sim 2^m$, so \eqref{cond-gap-weighted} is satisfied whenever
$|a_k(j_m)|=o(2^{-m})$. We also note that \eqref{cond-gap-weighted} is equivalent to
\begin{equation}
j_{m+1}|a_k(j_m)|\xrightarrow[|m|\to\infty]{} 0.
\end{equation}
Case (\ref{Thm:case-3}) allows arbitrary
gap growth, at the price of requiring a corresponding increase in the
reflecting strength along the chosen subsequence.

A particularly relevant special case of \eqref{Thm:case-3} occurs when the relative gaps satisfy
\begin{equation}\label{eq:case-2.5}
\limsup_{|m|\to\infty}\frac{g_m}{|j_m|}\in(0,\infty)   \quad\text{\ and \ }\quad |a_k(j_m)| = o(|j_m|^{-1}).
\end{equation}
Then $g_m\sim |j_m|$ as $m\to\infty$, so the two conditions appearing in \eqref{cond-gap-weighted} are equivalent:
\begin{equation}
|j_m|\,|a_k(j_m)|\to 0
\qquad\Longleftrightarrow\qquad
g_m\,|a_k(j_m)|\to 0.
\end{equation}
We note here that $v(W)=0$ under \eqref{eq:case-2.5} follows also from Theorem \ref{Thm:general} below, see Remark \ref{rem:case-2.5}.

Among the three cases, this is the regime most closely related to the sparse growing barrier picture familiar from the CMV setting studied by Damanik et al.\ \cite{DamanikEtAl2016}, see the next remark. 

\end{enumerate}

\begin{remark}\label{ref:DamanikSparce}
A related setting was studied by Damanik et al.~\cite{DamanikEtAl2016}. Their model is formulated in terms of the Verblunsky coefficients in a CMV setting that is equivalent to a shift-coin quantum walk $SC$ on the half-line in the present framework. In their notation, the sparse reflector sites are denoted by $L_m$; these correspond to the sites $j_m$ in our notation.
 They consider  reflectors placed at sites $L_m$ that grow extremely rapidly with $m$. More precisely, they assume
\begin{equation}
\frac{\log(L_1L_2\cdots L_{N-1})}{\log(L_N)}\xrightarrow[N\to\infty]{}0
\qquad\Longleftrightarrow\qquad
\frac{\log(L_m)}{\log(L_{m-1})}\xrightarrow[m\to\infty]{}\infty.
\end{equation}
In particular, $L_{m+1}$ grows faster than any fixed power of $L_m$. Typical examples are
$
L_m=2^{m^m}$
 and $L_m=2^{m!}$. These are compatible with case (\ref{Thm:case-3}) in Theorem \ref{Thm:determ}.
The coins are chosen so that
\begin{equation}
|a(L_m)|=L_m^{-\frac{1-\eta}{2\eta}}
\text{ where }\eta\in(0,1),
\text{ and } 
C(n)=\idty_2 \text{ for } n\notin\{L_m : m\in\Z_+\},
\end{equation}
so that the walk is perfectly transmitting away from the sparse reflector sites.

One of the main results of \cite{DamanikEtAl2016} is that the upper and lower transport exponents do not coincide; moreover, the upper transport exponent attains the ballistic value $1$, hence the system exhibit \emph{quantum intermittency}. Thus, the propagation is highly nonuniform across different time scales. However, these results do not directly determine the maximal velocity. On the other hand, our Theorem~\ref{Thm:determ} does not imply that the maximal velocity vanishes in this regime either, since
\begin{equation}
L_{m+1}|a(L_m)|
=
L_{m+1}L_m^{-\frac{1-\eta}{2\eta}}
\longrightarrow\infty
\end{equation}
for every $\eta\in(0,1)$. Therefore, neither \cite{DamanikEtAl2016} nor Theorem~\ref{Thm:determ} settles whether the maximal velocity vanishes in this setting. Since our notion of maximal velocity is also defined through a $\limsup$ in time, the fact that the upper transport exponent equals $1$ is at least compatible with positive maximal velocity in our sense. We note here that if $|a(L_m)|$ are assumed to have the stronger decay of $o(L_{m+1})$ then we guarantee zero-velocity by part (\ref{Thm:case-3}) in Theorem \ref{Thm:determ} regardless of the values of $|a(n)|$ for $n\notin\{L_m, m\in\Z_+\}$.

\end{remark}


Our methods yield a more general \emph{a priori} bound on the maximal velocity, from which cases (\ref{Thm:case-1}) and (\ref{Thm:case-2}) in Theorem \ref{Thm:determ} follow as a special case.
\begin{theorem}\label{Thm:general}
Let $W=S_+C_1S_-C_2$ be the split-step quantum walk, where the coins $C_1$ and $C_2$ are associated with the sequences $(a_1(n))_{n\in\Z}$ and $(a_2(n))_{n\in\Z}$ as in \eqref{def:a}. Then the maximal velocity $v(W)$ defined in \eqref{def:v} satisfies
\begin{equation}\label{eq:v-bound-general}
v(W)\leq \min_{k\in\{1,2\}}\inf_{(j_m)\in J}\inf_{N\in\N}\left(1+q_{(j_m)})\right)^{N+1}
\max\left\{
\sup_{m\geq N} g_{m-N}|a_k(j_{m+1})|,
\sup_{m\geq N} g_{-m+N}|a_k(j_{-m})|
\right\},
\end{equation}
where the first infimum is taken over
\begin{equation}\label{def:J}
J:=\left\{(j_m)_{m\in\Z}\subset\Z:\ j_m<j_{m+1}\ \forall m\in\Z
\ \text{and}\ 
\max_{\pm}\limsup_{m\to\pm\infty}\frac{g_m}{j_m}<\infty\right\}.
\end{equation}
In \eqref{eq:v-bound-general},
\begin{equation}
q_{(j_m)}:=\max_{\pm}\limsup_{m\to\pm\infty}\frac{g_m}{j_m},
\end{equation}
and $(g_m)_{m\in\Z}$ denotes the gap sequence defined in \eqref{def:g}.
\end{theorem}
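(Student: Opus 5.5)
Fix $k\in\{1,2\}$, a sequence $(j_m)\in J$ with $j_0=0$ (after relabeling), and $N\in\N$. The plan is to replace $Q$ by a modified position operator that is constant on cavities, as the introduction suggests. On the decomposition $\ell^2(\Z)=\bigoplus_m\CH_m$ of \eqref{dec} (cavities between consecutive $j_m$), define the block-scalar operator
\[
\widetilde Q:=\sum_{m\in\Z} x_m P_m ,
\]
where $P_m$ is the orthogonal projection onto $\CH_m$ and $x_m$ is a representative position in the cavity, for instance $x_m=j_m$ for $m\ge0$ and $x_m=j_{m+1}$ for $m<0$. Then $|Q-\widetilde Q|\le g_m$ on $\CH_m$. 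The comparison step shows that $v(W)$ can be computed with $\widetilde Q$ up to a factor. On $\CH_m$ we have $|Q|\le |x_m|+g_m\le(1+q_{(j_m)}+o(1))|x_m|$ for $|m|$ large. So $\|QW^t\psi\|\le(1+q+o(1))\|\widetilde QW^t\psi\|+C_\psi$, where the error from finitely many small $|m|$ is bounded. This uses $(j_m)\in J$ and presumably is the content of a lemma like Lemma~\ref{lem:com-velocity}.

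The next step is to bound $\|\widetilde Q W^t\psi\|$ by a commutator telescoping:
\[
\widetilde QW^t=W^t\widetilde Q+\sum_{s=0}^{t-1}W^{t-1-s}[\widetilde Q,W]W^s,
\]
which gives $\limsup_t\frac1t\|\widetilde QW^t\psi\|\le\limsup_s\|[\widetilde Q,W]W^s\psi\|$. Because $\widetilde Q$ is scalar on each cavity, it commutes with every coin block and shift that acts inside a cavity. Hence $[\widetilde Q,W]$ is supported only near the interface sites $j_m$. Writing $W\equiv\CL\CM$, the factor for the chosen $k$ is block diagonal across cavities except at the blocks $\Theta_k(j_m)$, whose off-diagonal entries are $a_k(j_m)$ up to phase. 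The other factor can be made block diagonal by choosing the cavity boundaries appropriately; the $\pm$ offsets in the identification \eqref{def:identify} must be tuned so that this works for each $k$. Then $[\widetilde Q,\CL\CM]$ is a direct sum of rank-two pieces at $j_m$ with norm about $|x_{m+1}-x_m|\,|a_k(j_m)|\approx g_m|a_k(j_m)|$. This gives the commutator bound
\[
\|[\widetilde Q,W]\restr{}{\text{far}}\|\lesssim\sup_{|m|\ge N}g_{m}|a_k(j_{m\pm1})| .
\]
The index shifts $g_{m-N}$ versus $a_k(j_{m+1})$ in \eqref{eq:v-bound-general} come from regrouping the cavities, described next.

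To produce the factor $(1+q)^{N+1}$ and the shifted indices, coarse-grain. Keep every cavity boundary inside a central window, but choose the position labels $x_m$ so that near the boundary of the window's growth the label jumps are measured by $g_{m-N}$ rather than $g_m$. Equivalently, apply the construction to the shifted sequence $(j_{m+N})$ with labels taken from the original. Each such shift costs one factor $(1+q)$ in the comparison $|Q|\le(1+q)|x|$, iterated $N+1$ times, and the finitely many central interfaces contribute only bounded terms, which vanish after dividing by $t$. Finally, take the infimum over $N$ and $(j_m)\in J$ and the minimum over $k$.

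The main obstacle is the commutator step. One has to verify that the cavity boundaries can be aligned so that exactly one factor, $\CL$ or $\CM$ for the chosen $k$, couples different cavities, and only through $a_k(j_m)$, and that the label choice makes the coupling weight exactly $g_{m-N}$. I would first carry this out concretely in the $\ell^2(\Z)$ CMV picture, computing $[\widetilde Q,\Theta_k(j_m)]$ explicitly.
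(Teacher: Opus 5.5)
Your proposal is correct in outline and follows the paper's proof closely.

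The paper's operator is $\tilde Q_N=\sum_{m\ge N}\bigl(j_{m-N}P_m+j_{-(m-N)}P_{-m}\bigr)$. Its labels are pulled $N$ cavities toward the origin on \emph{both} sides, so your ``shifted sequence $(j_{m+N})$'' must be read as $x_m=j_{m-N}$ for $m\ge N$ and $x_m=j_{m+N}$ for $m\le -N$. On the whole window $|m|<N$ the labels equal $0=j_0$.

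The factor $(1+q)^{N+1}$ comes from $j_{m+1}/j_{m-N}$, as you say. The paper packages this as the relative bound of Lemma~\ref{lem:Q-bound} fed into Theorem~\ref{Thm:Q}. That, not Lemma~\ref{lem:com-velocity}, is the comparison step, and your direct one-sided inequality is equivalent to it.

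Because the labels meet at $j_0=0$, $[\tilde Q_N,\CM]$ has no central terms at all. This is needed. In your telescoped bound $\limsup_s\|[\tilde Q,W]W^s\psi\|$, a nonzero central interface term would \emph{not} vanish after dividing by $t$. What does vanish is the contribution of a bounded summand of $\tilde Q$, provided you remove it before telescoping.

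The one genuine difference is the case $k=1$. The paper keeps the cavities $\Span\{\delta_{2j_m},\dots,\delta_{2j_{m+1}-1}\}$ and uses Lemma~\ref{lem:com-velocity} with conjugation by $T$, $v(\CL\CM)=v(\CM\CL)=v\bigl((T^{-1}\CM T)(T^{-1}\CL T)\bigr)$. This makes $C_1$ the active coin and lets the $k=2$ computation be reused verbatim, while recording a symmetry of independent interest. Your re-alignment to $\Span\{\delta_{2j_m+1},\dots,\delta_{2j_{m+1}}\}$ also works. It makes $\CM$ block diagonal, and $\CL$ couples neighbouring cavities only through $\Theta_1(j_m)$. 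Hence $[\tilde Q,W]=[\tilde Q,\CL]\CM$, and the same bound follows directly, without the cyclic-exchange lemma and its domain bookkeeping.
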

The proof of Theorem \ref{Thm:general} is presented in Sections \ref{sec:properties} and \ref{sec:pf-general}.

In the simple case where $a_1=a_2=a$, and $|a(m)|=|a(-m)|$ for all $m\in\Z$, we obtain the bound
\begin{equation}
v(W)\leq \inf_{(j_m)\in J}\inf_{N\in\N}\left(1+q_{(j_m)}\right)^{N+1}
\sup_{m\geq N} g_{m-N}|a(j_{m+1})|.
\end{equation}

The usefulness of Theorem \ref{Thm:general} becomes particularly transparent when the gaps are uniformly bounded. In that case the weights $g_m$ can be pulled out of the tail supremum, and the estimate reduces to a bound in terms of the two-sided limit superior of a subsequence of the coin parameters.

\begin{coro}\label{Cor:uniform-gap-bound}
Assume that $\sup_m g_m<\infty$. Then
\begin{equation}\label{v-bound-uniform-1}
v(W)\leq \min_{k\in\{1,2\}}
\inf_{\substack{(j_m)\in J\\ \sup_m g_m<\infty}}
\big(\sup_{m\in\Z} g_m\big)
\max\left\{
\limsup_{m\to\infty}|a_k(j_m)|,\,
\limsup_{m\to-\infty}|a_k(j_m)|
\right\}.
\end{equation}
\end{coro}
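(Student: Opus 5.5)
The corollary should follow directly from Theorem~\ref{Thm:general} by choosing $N=1$ (any fixed $N$ works), restricting the infimum to bounded-gap sequences, and computing each factor explicitly.

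First I will show that a sequence $(j_m)$ with $G:=\sup_m g_m<\infty$ automatically lies in $J$ and has $q_{(j_m)}=0$. Strict monotonicity gives $|j_m|\ge |m|$ when $j_0=0$; in general $|j_m|\to\infty$. Hence $g_m/|j_m|\le G/|j_m|\to 0$ as $m\to\pm\infty$. So the limsup defining $q_{(j_m)}$ vanishes, and $(j_m)\in J$. In particular the prefactor $(1+q_{(j_m)})^{N+1}$ equals $1$ for every $N$.

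Next I bound the two tail suprema. For every $N$ we have
\[
\sup_{m\ge N} g_{m-N}|a_k(j_{m+1})|\le G\sup_{m\ge N}|a_k(j_{m+1})|,
\qquad
\sup_{m\ge N} g_{-m+N}|a_k(j_{-m})|\le G\sup_{m\ge N}|a_k(j_{-m})|.
\]
Theorem~\ref{Thm:general} then gives, for each such sequence, each $k$ and each $N\in\N$,
\[
v(W)\le G\max\Bigl\{\sup_{m\ge N+1}|a_k(j_m)|,\ \sup_{m\ge N}|a_k(j_{-m})|\Bigr\}.
\]
Taking the infimum over $N$ means letting $N\to\infty$. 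Both suprema are nonincreasing in $N$, so they converge to $\limsup_{m\to\infty}|a_k(j_m)|$ and $\limsup_{m\to-\infty}|a_k(j_m)|$ respectively. Since the maximum of two nonincreasing sequences converges to the maximum of their limits, the infimum over $N$ equals $G\max\{\limsup_{m\to\infty},\limsup_{m\to-\infty}\}$.

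Finally, the minimum over $k$ and the infimum over bounded-gap sequences in $J$ are at least the corresponding quantities over all of $J$. Equivalently, the bound above holds for each such sequence and each $k$, so it holds for the infimum, which is \eqref{v-bound-uniform-1}.

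The only point needing care is the first step: checking that $q_{(j_m)}=0$ even though $j_m$ may be negative. I read the definition of $J$ as involving $g_m/|j_m|$, or note that for $m\to-\infty$ the ratio $g_m/j_m$ is negative and $\le 0$. Either way the prefactor is at most $1$. Nothing else is an obstacle.
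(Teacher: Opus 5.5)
Your proposal is correct and is essentially the paper's own argument, namely the one it gives for case~(i) of Theorem~\ref{Thm:determ}: bounded gaps force $q_{(j_m)}=0$, so the prefactor in Theorem~\ref{Thm:general} is $1$; the weights are then pulled out as $\sup_m g_m$, and the infimum over $N$ turns the tail suprema into the two $\limsup$'s. The only slip is your opening claim that a fixed $N=1$ suffices, since a fixed $N$ gives only the tail suprema from index $N$ on, and it is the infimum over $N$ (which you do take in the body) that produces the $\limsup$'s.
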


In particular, under the additional assumption of uniformly bounded gaps, the maximal velocity can be controlled solely through the two-sided limit superior of a subsequence of $(|a_k(n)|)_{n\in\Z}$. Thus the resulting upper bound depends only on the values of the coin parameters along the selected subsequence $(j_m)_{m\in\Z}$, and is independent of the values on its complement $\Z\setminus\{j_m:\ m\in\Z\}$.

A particularly useful instance of \eqref{v-bound-uniform-1} is obtained by choosing the trivial subsequence $j_m=m$ for all $m\in\Z$. This gives the general non-trivial \emph{a priori} bound
\begin{equation}\label{v:apriori}
v(W)\leq \min_{k\in\{1,2\}}
\max\left\{
\limsup_{n\to\infty}|a_k(n)|,\,
\limsup_{n\to-\infty}|a_k(n)|
\right\}.
\end{equation}
In the spatially homogeneous case with transmission parameter $a$ \footnote{The $(1,1)$-entry, $a$, of the coin operator  is understood as the \emph{transmission parameter}, since in this setting the velocity equals $|a|$; see, e.g., \cite{ARS2023-CMP, ARCSW25}.}, it is known that the velocity equals $|a|$, see \cite{ARCSW25}. The bound \eqref{v:apriori} shows that the upper bound $v(W)\leq |a|$ persists under arbitrary perturbations of the coins that are asymptotically vanishing at both $+\infty$ and $-\infty$; see, e.g., \cite{MaedaSuzukiWada2022,FudaFunakawaSuzuki2018,MaedaSasakiSegawaSuzukiSuzuki2022} for short- and long-range perturbations of the coin operators.

From the physical point of view, Theorem \ref{Thm:general} shows that the asymptotic transport is governed not by the full coin profile, but by the existence of sufficiently weak scattering sites distributed throughout the lattice with controlled spacing. This is precisely the mechanism behind the zero-velocity criteria established in Theorem \ref{Thm:determ} that follows by choosing subsequences for which the relative gaps (to position) vanish, see Section \ref{sec:zero-velocity-1-2}.



Next, we turn to a random setting and ask how the presence of randomly placed
 sites influences the large-scale dynamics of the quantum
walk. In contrast to the deterministic setting treated in
Theorem \ref{Thm:determ}, the coins are now generated by a sequence of
independent identically distributed unitary-$2\times 2$-valued random variables.  The following result shows that, under a mild
assumption on the distribution of the coin entries, the maximal velocity
vanishes almost surely.

In the sequel of this section we assume that $(a_k(n))_{n\in\Z}$ (for $k=1,2$) is a sequence of
i.i.d. random variables, and we
denote by $F$ the distribution function of $|a_k(n)|$ given as
  $F(x) = \mathbb{P}(|a_k(0)|\leq x)$, for $x\geq 0$.

\begin{coro}\label{coro:random}
Consider the split-step quantum walk $W=S_+ C_1 S_- C_2$ where $C_k$, for $k=1,2$ is associated with the bi-infinite sequence $(a_k(n))_{n\in\Z}$.
Fixed $k\in\{1,2\}$ and assume that $(a_k(n))_{n\in\Z}$ are i.i.d.\ random variables, and let $F$ be the
distribution function of $|a_k(0)|$. Suppose there exist constants $c>0$,
$\alpha\in(0,1)$, and $x_0>0$ such that
\begin{equation}\label{thm:eq:F}
  F(x) \geq c\, x^\alpha
  \qquad \text{for all } x\in(0,x_0).
\end{equation}
Then the maximal velocity $v(W)$ vanishes, i.e. $v(W) = 0$, almost surely.
\end{coro}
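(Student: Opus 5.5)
Our plan is to deduce Corollary~\ref{coro:random} from case~(\ref{Thm:case-3}) of Theorem~\ref{Thm:determ}, since that case puts no restriction on the gap structure. The argument has three steps: pick a deterministic sparse family of sites, show that almost surely a small coin entry can be found near each of them, and check that the resulting sequence satisfies \eqref{cond-gap-weighted}. Since \eqref{cond-gap-weighted} is equivalent to $j_{m+1}|a_k(j_m)|\to0$ (and its analogue at $-\infty$), it suffices to produce, almost surely, a strictly increasing bi-infinite sequence $(j_m)$ with $|j_m|\to\infty$ and coin entries decaying faster than the next index.

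\emph{Construction.} Fix $\gamma>1$ and define dyadic-type blocks $B_m^+=[\,2^{\gamma^m},\,2^{\gamma^{m}+1})\cap\Z$ for $m\geq m_0$, together with their mirror images $B_m^-=-B_m^+$. These blocks are disjoint and increasing. Set $x_m:=2^{-\gamma^{m+1}-1}/m$. Consider the event $E_m^\pm$ that some $n\in B_m^\pm$ satisfies $|a_k(n)|\leq x_m$. By independence and \eqref{thm:eq:F},
\[
\mathbb P\big((E_m^\pm)^c\big)=(1-F(x_m))^{|B_m^\pm|}\leq \exp\big(-c\,x_m^\alpha\,2^{\gamma^m}\big).
\]
The exponent here is $2^{\gamma^m-\alpha(\gamma^{m+1}+1)}m^{-\alpha}$. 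It tends to infinity super-polynomially provided $1-\alpha\gamma>0$, so we choose $\gamma\in(1,1/\alpha)$; this is possible exactly because $\alpha<1$. The probabilities $\mathbb P((E_m^\pm)^c)$ are then summable, and Borel--Cantelli shows that almost surely $E_m^\pm$ holds for all large $m$.

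\emph{Verifying the condition.} On that full-measure event, for each large $m$ we let $j_m$ (respectively $j_{-m}$) be a site in $B_m^+$ (respectively $B_m^-$) where the small value is attained. The finitely many missing central indices are filled in with arbitrary increasing integers; this finite modification does not affect the limits in \eqref{cond-0} or \eqref{cond-gap-weighted}. Since $j_{m+1}<2^{\gamma^{m+1}+1}$, we get
\[
j_{m+1}|a_k(j_m)|\leq 2^{\gamma^{m+1}+1}x_m=\frac{1}{m}\to0,
\]
and $|j_m|\,|a_k(j_m)|\to0$ follows a fortiori. The negative side is handled symmetrically, with $g_{-m}\leq|j_{-m}|$ so that $\max\{|j_{-m}|,g_{-m}\}\leq|j_{-m}|$. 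Case~(\ref{Thm:case-3}) then yields $v(W)=0$ almost surely.

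\emph{Main obstacle.} The delicate point is the balance between the block size, which must be large enough for a small value to appear with summable failure probability, and the required smallness, which must beat the position of the next block. Both grow doubly exponentially, and the choice $\gamma<1/\alpha$ is precisely what makes the exponent $\gamma^m(1-\alpha\gamma)$ diverge. If this balance failed, the fallback would be to use case~(\ref{Thm:case-2}) with polynomially spaced blocks and target $|a|=O(|j|^{-1})$. That would require $\alpha<1$ via the estimate $|B|\,|j|^{-\alpha}\to\infty$ with $|B|=o(|j|)$, which is also feasible, but it is less clean.
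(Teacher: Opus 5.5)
Your proof is correct, and it takes a genuinely different route from the paper.

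\textbf{The paper's route.} The paper proves Corollary~\ref{coro:random} through case~(\ref{Thm:case-2}) of Theorem~\ref{Thm:determ}. In Lemma~\ref{lem:good-subsequence} it takes \emph{all} sites with $|a_k(n)|\le 1/|n|$ and uses the second Borel--Cantelli lemma to show there are infinitely many of them. A first Borel--Cantelli argument on the events $B_n$ then shows that consecutive good sites are eventually at most $\lceil n^{\alpha}(\ln n)^2\rceil=o(n)$ apart. Hence $g_m/|j_m|\to0$ and $|a_k(j_m)|=\mathcal O(|j_m|^{-1})$. This is essentially the fallback you sketch at the end.

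\textbf{Your route.} You impose the gap structure deterministically through doubly exponential blocks and ask for a single, extremely small coin entry in each block. This needs only the first Borel--Cantelli lemma, with independence used only within each block, and you then invoke case~(\ref{Thm:case-3}).

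\textbf{What each buys.} Your argument is probabilistically lighter. It also lands on the case whose proof in the paper is the direct $\hat Q_N$ argument, so it bypasses Theorem~\ref{Thm:general} and the relative-boundedness Lemma~\ref{lem:Q-bound}. The paper's argument yields the more informative fact that near-reflecting sites with $|a_k(n)|\le1/|n|$ occur with sublinear gaps, and it illustrates the $\tilde Q_N$ machinery. Both use $\alpha<1$ essentially; in your case it enters through the choice $\gamma\in(1,1/\alpha)$, which makes $\gamma^m(1-\alpha\gamma)\to\infty$.

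\textbf{A slip on the negative side.} Under the paper's convention \eqref{def:g}, the negative-index gap $g_{-m}=j_{-m}-j_{-m-1}$ is the \emph{outward} gap. In your construction it is at least $2^{\gamma^{m+1}}-2^{\gamma^m+1}$, which is much larger than $|j_{-m}|$. So the claim $g_{-m}\le|j_{-m}|$ is false. This does no harm to the conclusion. The correct bound is $\max\{|j_{-m}|,g_{-m}\}\le|j_{-m-1}|<2^{\gamma^{m+1}+1}$. Your choice of $x_m$ then gives $\max\{|j_{-m}|,g_{-m}\}\,|a_k(j_{-m})|<1/m$, exactly as on the positive side.

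\textbf{Routine details.} Take $m_0$ large enough that $\gamma^m(\gamma-1)\ge1$, so the blocks are disjoint, and that $x_m<x_0$. Use $|B_m^\pm|\ge 2^{\gamma^m}-1$ rather than $2^{\gamma^m}$.
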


The random setting should be viewed as an application of the general method rather
than as a sharp random-model result in its own right. Indeed, for many disordered
models the literature establishes stronger conclusions, such as dynamical
localization or even Anderson localization. The point of
Corollary~\ref{coro:random} is different: it is a direct consequence of our general
a priori bound together with the deterministic zero-velocity criterion in Theorem \ref{Thm:determ}, and it
isolates a simple probabilistic mechanism for vanishing velocity through the lower
tail of the distribution of $|a_k(n)|$ near zero. In this sense, the novelty lies
not in obtaining a strong random result, but in the fact that the conclusion
follows immediately from the deterministic framework. For related rigorous results
on random CMV matrices, we mention in particular \cite{Zhu2024}, where Anderson
localization and dynamical localization are proved for arbitrary
i.i.d.\ Verblunsky coefficients.

 Here are some remarks
\begin{enumerate}[(A)]\itemsep3mm
\item  
Recall that sites with $a_k(n)\approx 0$ act as (almost) perfect reflectors for
the walk. The deterministic result in Theorem \ref{Thm:determ} shows that a
sufficiently dense family of such reflectors forces the maximal velocity
to vanish. Corollary \ref{coro:random} states that, in a random medium where
the coin entries $(a_k(n))$ are i.i.d.\ and small values of $|a_k(n)|$ occur often
enough, this situation arises with probability one: the walk cannot propagate
ballistically, and maximal velocity
is almost surely zero.

\item 
The lower bound \eqref{thm:eq:F}
for small $x>0$ means that the probability of finding a very small $|a_k(0)|$
is at least of order $x^\alpha$. Since $\alpha<1$, this grows \emph{faster}
than linearly as $x\to 0$, so the distribution has a significant amount of
mass near $0$: very small $|a_k(n)|$ occur relatively frequently.

The condition \eqref{thm:eq:F} is satisfied in a variety of
natural situations:
\begin{enumerate}[(a)]
\item If $|a_0|$ has density
\begin{equation}
  f(x) = \alpha x^{\alpha-1}\,\idty_{(0,1)}(x),
\end{equation}
then $F(x) = x^\alpha$ for $x\in(0,1)$, and the hypothesis holds with $c=1$.

\item If $\mathbb{P}(|a_k(0)|=0)>0$, then $F(x)\geq \mathbb{P}(|a_k(0)|=0)$ for all
sufficiently small $x>0$, so the condition $F(x)\geq c x^\alpha$ holds
for some choice of $c>0$ and any $\alpha\in(0,1)$ (after possibly reducing
$c$ and restricting to $x\in(0,x_0)$).
\end{enumerate}

\item Most rigorous results on disordered quantum walks concern \emph{static} spatial
disorder, and are predominantly formulated in the line of i.i.d.\ random phase applied to coins directly or through a random field, see, e.g.,
\cite{BHJ2003,HamzaJoyeStolz2006,hamzaLyapunovExponentsUnitary2007,JM10, HamzaJoyeStolz2009, Konno2009QWRE}, for one-dimensional results, as well
as higher-dimensional  extensions
\cite{Joye2012,HJ14,AschJoye2019,HJS2009,joye2026dynamical}. For instance, in the one-dimensional model of Joye--Merkli \cite{JM10}, the disorder
is implemented through i.i.d.\ random phases, while the moduli of the relevant transition amplitudes remain fixed.

A complementary one-dimensional random-coin model was studied in
\cite{AhlbrechtVolkherScholzWerner2011}, where dynamical and strong spectral
localization were proved under suitable assumptions on the associated transfer
matrices. By contrast, Corollary~\ref{coro:random} is sensitive to the lower tail
of the distribution of $|a_k(n)|$ at the origin: it yields a vanishing-velocity
criterion from the condition
$F(x)\ge c x^\alpha$
for sufficiently small $x$. To the best of our knowledge, this type of
zero-velocity statement, formulated directly in terms of the probability of very
small transmission amplitudes, does not seem to follow from the existing rigorous
localization results in the literature.

\end{enumerate}

\section{Some properties of the maximal velocity}\label{sec:properties}

In this section we establish two general properties of the maximal velocity $v(W)$ defined in  \eqref{def:v}. First, we show that $v(W)$ is stable under perturbations of the position operator $Q$ defined in \eqref{def:Q}: if $\tilde Q$ differs from $Q$ by an operator that is $Q$-bounded with relative bound less than 1, then the associated velocity is proportional to $v(W)$. This provides a robustness statement showing that the maximal velocity does not depend sensitively on the particular representative of the position operator.

We also prove a simple but useful symmetry property for products of unitary operators. Namely, when $U_1$ and $U_2$ have bounded commutators with $Q$, the products $U_1U_2$ and $U_2U_1$ have the same maximal velocity. This invariance under cyclic exchange of two factors is crucial in observing the symmetry of our results with respect to the parameters that determine $C_1$ or $C_2$ (or $\CL$ or $\CM$) as $v(W)=v(\CL\CM)=v(\CM\CL)$.

\begin{theorem}\label{Thm:Q}
Let $\tilde{Q}$ be an operator on $\mathcal H$ such
that $(Q-\tilde{Q})$ is $Q$-bounded 
\footnote{Recall that $(Q-\tilde{Q})$ being $Q$-bounded with relative bound $b<1$ means that, $D(Q)\subset D(Q-\tilde{Q})$ and for any $\epsilon>0$ there exists $b_\epsilon\in(b,b+\epsilon)$ and $c_\epsilon\geq 0$
such
that
\begin{equation}
  \|(Q-\tilde{Q})\phi\|
  \leq c_\epsilon\|\phi\| + b_\epsilon\|Q\phi\|
  \qquad\text{for all }\phi\in D(Q).
\end{equation}
In particular, $\tilde Q\phi = Q\phi - (Q-\tilde Q)\phi$ is well-defined for
every $\phi\in D(Q)$, see, e.g., \cite[Page 93]{Weidmann1980}.
}
 with relative bound $b<1$. Then
\begin{equation}\label{lem:eq:v}
 (1-b) v(W)
  \leq \sup_{\substack{\psi\in D(Q)\\\|\psi\|=1}}
    \limsup_{t\to\infty}\frac{1}{t}\bigl\|\tilde{Q}\, W^t \psi\bigr\| \leq (1+b)v(W).
\end{equation}
\end{theorem}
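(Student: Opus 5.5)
The plan is a direct two-sided triangle-inequality argument applied to $\phi=W^t\psi$, followed by dividing by $t$ and taking $\limsup$ and $\sup$. The one point needing care is that $W^t\psi\in D(Q)$ whenever $\psi\in D(Q)$, so that $\tilde Q W^t\psi$ is defined. For the split-step walk this holds because $S_\pm$ and $C_k$ each move position by at most one site. Hence $[Q,W]$ is bounded, and $W$, $W^{-1}$ preserve $D(Q)$ with
\[
\|QW^t\psi\|\le \|Q\psi\|+t\,\|[Q,W]\|\,\|\psi\|.
\]
In particular $\|QW^t\psi\|$ is finite for every $t$, so the quantity in \eqref{def:v} and the middle quantity in \eqref{lem:eq:v} make sense.

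Fix a normalized $\psi\in D(Q)$ and $\epsilon>0$, and take $b_\epsilon\in(b,b+\epsilon)$ and $c_\epsilon$ as in the footnote. Writing $\tilde Q\phi=Q\phi-(Q-\tilde Q)\phi$ with $\phi=W^t\psi$ and using $\|W^t\psi\|=1$, the triangle inequality gives
\[
(1-b_\epsilon)\|QW^t\psi\|-c_\epsilon\le \|\tilde QW^t\psi\|\le (1+b_\epsilon)\|QW^t\psi\|+c_\epsilon .
\]
Dividing by $t$ makes the constant $c_\epsilon/t$ vanish as $t\to\infty$. Taking $\limsup_{t\to\infty}$, which respects multiplication by the positive constants $1\pm b_\epsilon$, yields
\[
(1-b_\epsilon)\limsup_{t\to\infty}\tfrac1t\|QW^t\psi\|\le \limsup_{t\to\infty}\tfrac1t\|\tilde QW^t\psi\|\le(1+b_\epsilon)\limsup_{t\to\infty}\tfrac1t\|QW^t\psi\|.
\]
Here $1-b_\epsilon>0$ once $\epsilon<1-b$.

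Next take the supremum over normalized $\psi\in D(Q)$ in all three terms. Since the constants do not depend on $\psi$, the supremum passes through the inequalities and we obtain $(1-b_\epsilon)v(W)\le \tilde v\le (1+b_\epsilon)v(W)$, where $\tilde v$ denotes the middle quantity in \eqref{lem:eq:v}. Letting $\epsilon\to0$ gives $b_\epsilon\to b$ and hence \eqref{lem:eq:v}. If $v(W)=\infty$ this limit is not needed: the left inequality shows $\tilde v=\infty$ and the right one holds trivially. By the a priori bound from $[Q,W]$ bounded, however, $v(W)\le\|[Q,W]\|<\infty$ anyway.

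The only point that could be called an obstacle is the domain invariance $W^tD(Q)\subset D(Q)$. I would check it directly: $W$ is a product of operators that are each banded of width one in position, so $QW-WQ$ extends to a bounded operator and $W$ maps $D(Q)$ into itself. The remaining steps are routine.
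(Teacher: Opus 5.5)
Your proof is correct and follows essentially the same route as the paper: you apply the relative bound at $\phi=W^t\psi$ to get the two-sided triangle inequality, then divide by $t$, take $\limsup$, let $\epsilon\to0$, and take the supremum over $\psi$. Your explicit check that $W^tD(Q)\subset D(Q)$, via boundedness of $[Q,W]$, is something the paper's proof uses tacitly; the same domain argument appears in the proof of Lemma~\ref{lem:com-velocity}, so including it is a welcome addition to rigor rather than a different approach.
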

In particular, if $\tilde{Q}-Q$ is $Q$-bounded with relative bound zero than $\tilde{Q}$ can replace $Q$ in the definition of the maximal velocity, i.e.,
\begin{equation}
b=0 \ \ \Rightarrow \ \ v(W)=\sup_{\substack{\psi\in D(Q)\\\|\psi\|=1}} \limsup_{t\to\infty}\frac{1}{t}\bigl\|\tilde{Q}\, W^t \psi\bigr\|.
\end{equation}

\begin{proof}[Proof of Theorem \ref{Thm:Q}]
Fix $\psi\in D(Q)$ with $\|\psi\|=1$ and set
\begin{equation}
  \zeta_t := \frac{1}{t}\,\|Q W^t\psi\|, \qquad
  \eta_t := \frac{1}{t}\,\|\tilde{Q} W^t\psi\|, \qquad t\in\N.
\end{equation}
By the triangle inequality we have
\begin{equation}\label{pf:lem:triangle}
  |\zeta_t-\eta_t|
  = \frac{1}{t}\bigl|\|Q W^t\psi\| - \|\tilde Q W^t\psi\|\bigr|
  \leq \frac{1}{t}\,\|(Q-\tilde Q)W^t\psi\|.
\end{equation}
Since $(Q-\tilde{Q})$ is $Q$-bounded with relative bound $b<1$, then given any $\epsilon>0$, there exists $b_\epsilon\in (b,b+\epsilon)$ and $c_\epsilon\geq 0$
such that
\begin{equation}
  \|(Q-\tilde Q)\phi\|
  \leq b_\epsilon\|Q\phi\| + c_\epsilon\|\phi\|
  \qquad\text{for all }\phi\in D(Q).
\end{equation}
Applying this with $\phi = W^t\psi$ in \eqref{pf:lem:triangle} and recalling that $W$ is unitary gives
\begin{equation}
  |\zeta_t-\eta_t|
  \leq b_\epsilon \zeta_t + \frac{c_\epsilon}{t}\|\psi\|
  = b_\epsilon \zeta_t + \frac{c_\epsilon}{t},\qquad t\in\N.
\end{equation}
Hence
\begin{equation}
  (1-b_\epsilon)\zeta_t - \frac{c_\epsilon}{t}
  \leq \eta_t
  \leq (1+b_\epsilon)\zeta_t + \frac{c_\epsilon}{t}.
\end{equation}
Taking the limit superior as $t\to\infty$ and using that $\zeta_t,\eta_t\geq 0$ then take the limit as $\epsilon\to 0$, noting that $b_\epsilon\to b$, we obtain
\begin{equation}
  (1-b)\,\limsup_{t\to\infty} \zeta_t
  \leq \limsup_{t\to\infty} \eta_t
  \leq (1+b)\ \limsup_{t\to\infty} \zeta_t
\end{equation}
where these limsups take values in $[0,\infty]$.

Because $\psi\in D(Q)$ with $\|\psi\|=1$ is arbitrary, taking the supremum over all such $\psi$ yields
\eqref{lem:eq:v}, and the lemma follows.
\end{proof}

\begin{lemma}\label{lem:com-velocity}
Let $U_1,U_2$ be unitary operators such that
$[Q,U_k]$ is a bounded operator on $\CH$ for $k=1,2$.
Then $[Q,U_1U_2]$ and $[Q,U_2U_1]$ are bounded as well, and
\begin{equation}
v(U_1U_2)=v(U_2U_1).
\end{equation}
\end{lemma}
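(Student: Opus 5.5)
The plan is to prove the boundedness claim first and then the equality of velocities through an intertwining identity. Boundedness follows from the Leibniz rule for commutators,
\begin{equation}
[Q,U_1U_2]=[Q,U_1]U_2+U_1[Q,U_2],
\end{equation}
which holds on $D(Q)$ once we check that $U_k$ maps $D(Q)$ into itself. To see this, note that for $\phi\in D(Q)$ the bounded operator $[Q,U_k]$ (understood as the closure of $QU_k-U_kQ$ on a core such as the compactly supported vectors) gives $QU_k\phi=U_kQ\phi+[Q,U_k]\phi$. By closedness of $Q$ this shows $U_k\phi\in D(Q)$. The right-hand side of the Leibniz identity is then a sum of bounded operators, so $\|[Q,U_1U_2]\|\le \|[Q,U_1]\|+\|[Q,U_2]\|$, and the same holds for $U_2U_1$. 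The same closedness argument, iterated, also gives $W^t D(Q)\subset D(Q)$ for $W=U_1U_2$, so the quantities in \eqref{def:v} are well defined.

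For the velocities, the key identity is
\begin{equation}
(U_2U_1)^{t}=U_2\,(U_1U_2)^{t-1}\,U_1 ,
\end{equation}
valid for all $t\ge1$. Fix $\psi\in D(Q)$ with $\|\psi\|=1$ and set $\phi:=U_1\psi\in D(Q)$, so that $\|\phi\|=1$. Using $QU_2=U_2Q+[Q,U_2]$ on $D(Q)$ and unitarity, we get
\begin{equation}
\bigl\|Q(U_2U_1)^t\psi\bigr\|\le \bigl\|Q(U_1U_2)^{t-1}\phi\bigr\|+\|[Q,U_2]\|.
\end{equation}
Dividing by $t$ and using $(t-1)/t\to1$, we obtain
\begin{equation}
\limsup_{t\to\infty}\tfrac1t\|Q(U_2U_1)^t\psi\|\le \limsup_{s\to\infty}\tfrac1s\|Q(U_1U_2)^{s}\phi\|\le v(U_1U_2).
\end{equation}
Taking the supremum over $\psi$ yields $v(U_2U_1)\le v(U_1U_2)$. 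Exchanging the roles of $U_1$ and $U_2$ gives the reverse inequality, and hence equality. The argument works verbatim when one side is $+\infty$, although boundedness of the commutators in fact makes both sides finite, since $\|Q W^t\psi\|\le\|Q\psi\|+t\|[Q,W]\|$.

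The only delicate step is domain bookkeeping. We need $U_k D(Q)\subset D(Q)$ and the commutator identity holding on all of $D(Q)$ rather than merely on a core. The plan is to handle this with the closed-graph argument for the closed operator $Q$: approximate $\phi\in D(Q)$ by compactly supported $\phi_n$ with $\phi_n\to\phi$ and $Q\phi_n\to Q\phi$. Then $QU_k\phi_n=U_kQ\phi_n+[Q,U_k]\phi_n$ converges, so $U_k\phi\in D(Q)$ and the identity passes to the limit. In the application in the paper, $U_k\in\{\CL,\CM\}$ are banded, so their commutators with $Q$ are manifestly bounded and the setting is concrete.
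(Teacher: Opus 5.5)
Your proof is correct and follows the same strategy as the paper. You intertwine the two powers through a single factor, peel that factor off using the bounded commutator at an $O(1)$ cost, divide by $t$, and then swap the roles of $U_1$ and $U_2$. The only difference is that you use $(U_2U_1)^t=U_2(U_1U_2)^{t-1}U_1$ instead of the paper's conjugation $(U_1U_2)^t=U_1(U_2U_1)^tU_1^*$. This spares you from showing $U_k^*D(Q)\subset D(Q)$, at the price of a harmless shift $t\mapsto t-1$. Your closedness argument also makes rigorous the invariance $U_kD(Q)\subset D(Q)$, which the paper obtains only formally.
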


\begin{proof}
For $\varphi\in D(Q)$ we have
$
Q\,U_k\varphi \;=\; U_k\,Q\varphi + [ Q,U_k]\varphi$,  for $k=1,2$.
Hence $U_k\varphi\in D(Q)$, so 
\begin{equation}\label{eq:subset1}
U_kD(Q)\subset D(Q).
\end{equation}
Moreover,
$[Q,U_k^*] = -\,U_k^*[Q,U_k]U_k^*$ is bounded on $\mathcal{H}$,
so the same argument gives $U_k^*D(Q)\subset D(Q)$, hence 
\begin{equation}\label{eq:subset2}
D(Q)\subset U_k D(Q).
\end{equation}
\eqref{eq:subset1} and \eqref{eq:subset2} give
\begin{equation}\label{Domain}
U_kD(Q)=D(Q)\quad\text{and}\quad U_k^*D(Q)=D(Q)\qquad (k=1,2).
\end{equation}
In particular, $(U_1U_2)^t$ and $(U_2U_1)^t$ preserve $D(Q)$ for all $t\in\mathbb N$.

Fix $\psi\in D(Q)$ with $\|\psi\|=1$ and $t\in\mathbb N$. Use $
(U_1U_2)^t \;=\; U_1 (U_2U_1)^t U_1^*$ and
define $\phi_t := (U_2U_1)^t U_1^*\psi$. By \eqref{Domain}, $\phi_t\in D(Q)$, and $\|\phi_t\|=\|\psi\|=1$.
Then
\begin{align}
\|Q(U_1U_2)^t\psi\|
&= \|Q\,U_1\phi_t\|
 = \bigr\|U_1Q\phi_t + [Q,U_1]\phi_t \bigr\| \notag \\
&\leq \|Q\phi_t\| + \bigr\|[Q,U_1]\bigr\|\,\|\phi_t\|
= \|Q (U_2U_1)^t U_1^*\psi\| + \bigr\|[Q,U_1]\bigr\|.
\end{align}
Divide by $t$ and take $\limsup$ as $t\to\infty$ to obtain
\begin{equation}
\limsup_{t\to\infty}\frac1t\|Q(U_1U_2)^t\psi\|
\le
\limsup_{t\to\infty}\frac1t\|Q (U_2U_1)^t U_1^*\psi\|.
\end{equation}
Now take the supremum over $\psi\in D(Q)$ with $\|\psi\|=1$. Since $U_1^*$ is unitary and maps $D(Q)$ onto itself \eqref{Domain},
then
\begin{equation}
v(U_1U_2)\le
\sup_{\substack{\eta\in D(Q)\\ \|\eta\|=1}}
\limsup_{t\to\infty}\frac1t\|Q (U_2U_1)^t \eta\|
= v(U_2U_1).
\end{equation}
Interchanging the roles of $U_1$ and $U_2$ yields the reverse inequality $v(U_2U_1)\leq v(U_1U_2)$, and therefore
$v(U_1U_2)=v(U_2U_1)$.
\end{proof}

\section{A general a priori bound for the maximal velocity}\label{sec:pf-general}

In this section we prove Theorem \ref{Thm:general}. Our starting point is the natural decomposition of $\ell^2(\mathbb{Z})$ induced by a strictly increasing sequence $(j_m)_{m\in\mathbb{Z}}$ satisfying
\begin{equation}
\limsup_{|m|\to\infty}\frac{g_m}{|j_m|}<\infty.
\end{equation}
Relative to this decomposition, we introduce a modified position operator $\tilde Q$, which, by Theorem \ref{Thm:Q}, may be used in place of $Q$ in the definition of the maximal velocity.

Motivated by the perfect reflectors setting discussed in Section \ref{sec:motivation}, for a given sequence of indices $(j_m)_{m\in\mathbb{Z}}\in J$, with $J$ as in \eqref{def:J}, we decompose $\ell^2(\mathbb{Z})$ as in \eqref{dec}. More precisely, for each $m\in\mathbb{Z}$, we define
\begin{equation}\label{def:Hm}
  \mathcal{H}_m
  = \Span\{\delta_n : n = 2j_m,2j_m+1,\dots,2j_{m+1}-1\}.
\end{equation}
Then $\ell^2(\mathbb{Z})$ admits the orthogonal decomposition (see Figure \ref{fig:dec})
\begin{equation}\label{decomposition}
  \ell^2(\mathbb{Z}) = \bigoplus_{m\in\mathbb{Z}} \mathcal{H}_m.
\end{equation}

Recall the correspondence $W=S_+ C_1 S_- C_2 \equiv \CL\CM$ between the split-step walk and the extended CMV matrices defined in \eqref{def:CMV}. The operator $\CL$ is block-diagonal, with $2\times 2$ blocks $\sigma_x C_1(n)$ relative to the decomposition
\begin{equation}
\big\{\Span\{\delta_{2n},\delta_{2n+1}\}\big\}_{n\in\mathbb{Z}},
\end{equation}
which clearly refines $\{\mathcal{H}_m\}_{m\in\mathbb{Z}}$. Hence $\CL$ is also block-diagonal with respect to \eqref{decomposition}.

To describe $\CM$ relative to \eqref{decomposition}, it is convenient to relabel (and order) the canonical basis of $\mathcal{H}_m$ as
\begin{equation}\label{def:e-m-j}
e_{m|r}:=\delta_{2j_m+r-1} \text{ for }r=1, \ldots, d_m,
\qquad
d_m:=\dim(\mathcal{H}_m)=2|j_{m+1}-j_m|.
\end{equation}
Let $P_m$ denote the orthogonal projection onto $\mathcal{H}_m$. Then $\CM$ takes the block tridiagonal form
\begin{equation}\label{def-CM-dec}
\CM
= \sum_{m\in\mathbb{Z}} P_m \CM P_m
+ \sum_{m\in\mathbb{Z}}
\left(
\overline{a_2(j_{m+1})}\,|e_{m|d_m}\rangle\langle e_{m+1|1}|
+
a_2(j_{m+1})\,|e_{m+1|1}\rangle\langle e_{m|d_m}|
\right).
\end{equation}
The first sum consists of the diagonal $d_m\times d_m$ blocks of $\CM$ and will play no role in our analysis, since it commutes with the modified position operator introduced in \eqref{def:tilde-Q-N} below.

With respect to \eqref{decomposition}, the position operator decomposes as
\begin{equation}
Q \equiv \bigoplus_{m\in\mathbb{Z}} \restr{Q}{\mathcal{H}_m},
\end{equation}
where
\begin{equation}\label{def:Q-Hm}
\restr{Q}{\mathcal{H}_m}
=
\begin{bmatrix}
j_m & & & & \\
& \boxed{\tiny\begin{matrix} j_m+1 & \\ & j_m+1 \end{matrix}} & & & \\
& & \ddots & & \\
& & & & j_{m+1}
\end{bmatrix}_{d_m\times d_m}.
\end{equation}
A key ingredient in the argument below is Theorem \ref{Thm:Q}, which allows us to replace $Q$ by a suitable block-scalar operator $\tilde Q$ with respect to the decomposition \eqref{decomposition}. The latter is chosen so that $[\tilde Q,\CL]=0$, and therefore
\begin{equation}
[\tilde Q,\CL\CM]=\CL[\tilde Q,\CM].
\end{equation}
In this way, $\CM$ (equivalently, $C_2$) plays the role of the \emph{active} operator, leading to the bound \eqref{eq:v-bound} below in terms of the sequence $(a_2(n))_{n\in\mathbb{Z}}$.

Fix $N\in\mathbb{N}$, and for $(j_m)_{m\in\mathbb{Z}}\in J$ as above, define the modified position operator
\begin{equation}\label{def:tilde-Q-N}
\tilde{Q}_N= \tilde{Q}_N\big((j_m)_{m\in\mathbb{Z}}\big)
:=\sum_{m=N}^\infty\left( j_{m-N}P_m +j_{-(m-N)}P_{-m}\right).
\end{equation}
We present a proof of the following lemma at the end of this section.

\begin{lemma}\label{lem:Q-bound}
For a given $N\in\mathbb{N}$, and the the operator $\tilde{Q}_N=\tilde{Q}_N\big((j_m)_{m\in\mathbb{Z}}\big)$ defined in \eqref{def:tilde-Q-N}. The difference $(Q-\tilde{Q}_N)$ is $Q$-bounded with relative bound 
\begin{equation}\label{def:b}
b_N\leq 1-\left(1+q\right)^{-(N+1)}<1, \ \text{ where }q=q_{(j_m)}:=\max_\pm\ \limsup_{m\to\pm \infty}\frac{g_m}{|j_m|}.
\end{equation}
\end{lemma}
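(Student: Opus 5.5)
Both $Q$ and $\tilde Q_N$ are diagonal in the canonical basis $\{\delta_n\}$, so $Q-\tilde Q_N$ is a diagonal multiplication operator. The plan is therefore to compare the two diagonal entries pointwise. The lemma will follow once we show that, for every $\varepsilon>0$, there is a finite set of basis vectors outside of which
\[
|x-\tilde x|\le \bigl(1-(1+q+\varepsilon)^{-(N+1)}\bigr)|x|,
\]
where $x$ is the $Q$-eigenvalue and $\tilde x$ the $\tilde Q_N$-eigenvalue. The finitely many exceptional entries are bounded, so they are absorbed into the constant $c_\varepsilon$. Since both operators are diagonal, this pointwise estimate yields $\|(Q-\tilde Q_N)\phi\|\le c_\varepsilon\|\phi\|+b_\varepsilon\|Q\phi\|$ with $b_\varepsilon\to 1-(1+q)^{-(N+1)}$, which is the required relative bound.

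For the positive branch, take $m\ge N$ and a basis vector in $\mathcal H_m$. By \eqref{def:Q-Hm} its position $x$ satisfies $j_m\le x\le j_{m+1}$, while $\tilde x=j_{m-N}$. Since $j_0=0$ and the sequence increases, we have $0\le j_{m-N}\le j_m\le x$. Hence
\[
0\le x-\tilde x = x\Bigl(1-\frac{j_{m-N}}{x}\Bigr)\le x\Bigl(1-\frac{j_{m-N}}{j_{m+1}}\Bigr).
\]
The ratio telescopes:
\[
\frac{j_{m+1}}{j_{m-N}}=\prod_{i=m-N}^{m}\frac{j_{i+1}}{j_i}=\prod_{i=m-N}^{m}\Bigl(1+\frac{g_i}{j_i}\Bigr),
\]
where $i\ge 1$ is needed to avoid division by $j_0=0$. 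By the definition of $q$, each factor is at most $1+q+\varepsilon$ once $m-N\ge M_\varepsilon$. Therefore $j_{m-N}/j_{m+1}\ge (1+q+\varepsilon)^{-(N+1)}$ for all large $m$, which gives the pointwise bound.

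Three kinds of basis vectors lie outside this estimate, and each is finite in number:
\begin{enumerate}[(a)]
\item those in $\mathcal H_m$ with $|m|<N$, on which $\tilde Q_N=0$ and $|Q|$ is bounded;
\item those in $\mathcal H_m$ with $N\le m< N+M_\varepsilon$;
\item the analogous vectors on the negative side.
\end{enumerate}
They span a finite-dimensional space, so $Q-\tilde Q_N$ is bounded there. The negative branch is handled symmetrically with $P_{-m}$, $j_{-(m-N)}$ and the gaps $g_{-m}=j_{-m}-j_{-m-1}$. One point needs care: the blocks $\mathcal H_{-m}$ span positions $j_{-m}$ to $j_{-m+1}$, and the convention in \eqref{def:g} must be matched with the indexing of $\tilde Q_N$. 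We also read $g_m/j_m$ in absolute value, as in \eqref{def:b}.

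The main obstacle is precisely this negative-side bookkeeping. There, $|x|\le |j_{-m}|$ and $\tilde x=j_{-(m-N)}$, so the telescoping product has to run over the correct $N+1$ consecutive gaps. We may also need to shift one index, or note that the count of factors is at most $N+1$, to obtain exactly the exponent $N+1$. I would first write out the case $N=0$ on both sides to fix the indices, and then carry the argument through for general $N$.
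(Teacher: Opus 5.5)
Your proposal is correct and is essentially the paper's proof. Both compare diagonal entries, bound $|x-\tilde x|/|x|$ by $1-j_{m-N}/j_{m+1}$, control this ratio through the telescoping product of $N+1$ factors $1+g_i/j_i$, and absorb the finitely many exceptional blocks into the constant.

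The negative-side bookkeeping you flag works out as follows. On $\mathcal{H}_{-m}$ one has $|j_{-m+N}|\le |j_{-m+1}|\le |x|\le |j_{-m}|$, so $|x-\tilde x|\le |x|\bigl(1-|j_{-m+N}|/|j_{-m}|\bigr)$ with $|j_{-m}|/|j_{-m+N}|=\prod_{k=-m+1}^{-m+N}(1+g_k/|j_k|)$. This product has only $N$ factors, which gives $1-(1+q_-)^{-N}$, exactly as in the paper.

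This step uses $N\ge 1$. For $N=0$, $\tilde x=j_{-m}$ lies farther from the origin than $x$, and the ratio can reach $g_{-m+1}/|j_{-m+1}|$, so the claimed bound genuinely fails there. Do not expect your planned $N=0$ test case to confirm the estimate.
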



 Then Theorem \ref{Thm:Q} shows that $D(Q)\subset D(\tilde{Q}_N)$ and  for every $N\in\mathbb{N}$ we have
\begin{equation}\label{eq:v-bound-0}
 v(W)
\leq (1-b_N)^{-1} \sup_{\substack{\psi\in D(Q)\\ \|\psi\|=1}}
    \limsup_{t\to\infty}\frac{1}{t}\bigr\|\tilde{Q}_N\,W^t \psi\bigr\|\leq
    (1+q)^{N+1}\sup_{\substack{\psi\in D(Q)\\ \|\psi\|=1}}\limsup_{t\to\infty}\frac{1}{t}\bigr\|[\tilde{Q}_N, W^t]\psi\bigr\|.
\end{equation} 
Expand the commutator
\begin{equation}\label{eq:com-expand}
[\tilde{Q}_N, W^t]=\sum_{k=0}^t W^k[\tilde{Q}_N, W] W^{t-k-1}
\end{equation}
and since $W$ is unitary, we obtain from \eqref{eq:v-bound-0}, if $[\tilde{Q}_N, W]$  is bounded on $\mathcal{H}$, then
\begin{equation}\label{v-bound-1}
v(W)\leq (1+q)^{N+1}\bigr\|[\tilde{Q}_N, W]\bigr\|=(1+q)^{N+1} \bigr\|[\tilde{Q}_N, \CM]\bigr\|
\end{equation}
where the last equality follows from the fact that $\restr{\tilde{Q}_N}{\CH_m}$ is a constant multiple of the identity and hence it commutes with the block diagonal $\CL$.

A direct tedious calculation using \eqref{def-CM-dec} and \eqref{def:tilde-Q-N} shows that
\begin{align}
[\tilde{Q}_N, \CM]&=
\sum_{m=N}^\infty g_{m-N} \left(a_2(j_{m+1})\ |e_{m+1|1}\rangle\langle e_{m|d_{m}}|-\overline{a_2(j_{m+1}})\ |e_{m|d_m}\rangle\langle e_{m+1|1}|\right) + \notag\\
& + \sum_{m=N+1}^\infty g_{-m+1+N} \left(a_2(j_{-m+1})\ |e_{-m+1|1}\rangle\langle e_{-m|d_{-m}}| - \overline{a_2(j_{-m+1}})\ |e_{-m|d_{-m}}\rangle\langle e_{-m+1|1}|\right)
\end{align}
Then it is direct to see that $[\tilde{Q}_N, \CM]\left([\tilde{Q}_N, \CM]\right)^*=$
\begin{align}
&=
\sum_{m=N}^\infty g_{m-N}^2\ |a_2(j_{m+1})|^2 \left(|e_{m|d_m}\rangle\langle e_{m|d_m}| + |e_{m+1|1}\rangle\langle e_{m+1|1}|\right)\notag\\
& + \sum_{m=N+1}^\infty g_{-m+1+N}^2 |a_2(j_{-m+1})|^2 \left(|e_{-m|d_{-m}}\rangle\langle e_{-m|d_{-m}}| + |e_{-m+1|1}\rangle\langle e_{-m+1|1}|\right)
\end{align}
which is a diagonal matrix, hence (if $[\tilde{Q}_N, \CM]$ is a bounded operator)
\begin{equation}
\bigr\|[\tilde{Q}_N, \CM]\bigr\|= \max\left\{\sup_{m\geq N}g_{m-N}\ |a_2(j_{m+1})| , \sup_{m\geq N+1}g_{-m+1+N}\ |a_2(j_{-m+1})|\right\}
\end{equation}
Note that the left hand side of \eqref{v-bound-1} is an upper bound for $v(W)$ for all $N\in\N$ and for arbitrary set of indices $(j_m)_{m\in\Z}\in J$, thus
\begin{equation}\label{eq:v-bound}
0\leq v(W)\leq \inf_{(j_m)\in J }\inf_{N\in\mathbb{N}}(1+q)^{N+1}\max\left\{\sup_{m\geq N}g_{m-N}\ |a_2(j_{m+1})| ,\sup_{m\geq N+1}g_{-m+1+N}\ |a_2(j_{-m+1})| \right\}.
\end{equation}

To produce a similar bound in terms of $(a_1(n))_{n\in\Z}$ we follow the following argument.

It is straightforward to verify that the commutators 
$[Q,\CL]$, $[Q,\CM]$, and $[Q,T^{\pm1}]$ 
are bounded operators on $\CH$, so by 
 Lemma \ref{lem:com-velocity}, we have
\begin{equation}\label{eq:LM-ML}
v(\CL \CM)
= v(\CM \CL)
= v(T^{-1}\CM \CL T)
= v\bigl((T^{-1}\CM T)(T^{-1}\CL T)\bigr)
=: v(\tilde{\CL}\tilde{\CM}),
\end{equation}
where $T$ denotes the translation by one on $\ell^2(\Z)$. 

The operator $\tilde{\CL}=T^{-1}\CM T$ (shifted $\CM$) is $2\times2$  block-diagonal with respect to
$\bigoplus_{n\in\Z}\Span\{\delta_{2n-2},\delta_{2n-1}\}$,
hence block-diagonal with respect to \eqref{decomposition}, and in particular
$[\tilde Q_N,\tilde{\CL}]=0$. Moreover,
\begin{equation}
\tilde{\CM}
\equiv \bigoplus_{n\in\Z} (\sigma_x C_1(n))
\end{equation}
is $2\times2$ block-diagonal with respect to $
\bigoplus_{n\in\Z}\Span\{\delta_{2n-1},\delta_{2n}\}$,
so that $C_1$ plays the role of the active coin. Then the argument above produces the bound \eqref{eq:v-bound} with $a_2$ replaced by $a_1$.

It remains to prove that $(Q-\tilde{Q}_N)$ is $Q$-bounded with relative bound less than 1. 
\begin{proof}[Proof of Lemma \ref{lem:Q-bound}]

Recall the definitions \eqref{def:Q-Hm} and \eqref{def:tilde-Q-N} of $Q$ and $\tilde{Q}_N$, respectively.

For $m\in\Z$, define $\alpha_\ell^{(m)}$ to be the $\ell$-th diagonal entry of $\restr{(Q-\tilde{Q}_N)}{\CH_m}$,
\begin{equation}
\alpha_\ell^{(m)}:=\left(\restr{(Q-\tilde{Q}_N)}{\CH_m}\right)_{\ell,\ell}, \text{ for }\ell=1,2,\ldots,d_m.
\end{equation}
For $(j_m)\in J$, we have
\begin{equation}
0\leq q_{\pm}:=\limsup_{m\to \pm\infty}\frac{g_m}{j_m} \quad  \iff \quad \limsup_{m\to\infty}\frac{j_{\pm m+1}}{j_{\pm m}}= 1+q_\pm 
\end{equation}

Fix $N\in\N$. Note that for $m\geq N$
\begin{equation}
\frac{|\alpha_\ell^{(m)}|}{(\restr{Q}{\CH_m})_{\ell,\ell}} = \frac{(\restr{Q}{\CH_m})_{\ell,\ell}-j_{m-N}}{(\restr{Q}{\CH_m})_{\ell,\ell}}=1-\frac{j_{m-N}}{(\restr{Q}{\CH_m})_{\ell,\ell}}\leq 1-\frac{j_{m-N}}{j_{m+1}}.
\end{equation}
Take the limit sup as $m\to\infty$ and observe that
\begin{equation}
\displaystyle \liminf_{m\to\infty}\frac{j_{m-N}}{j_{m+1}}=\liminf_{m\to\infty}\prod_{r=0}^{N}\frac{j_{m-r}}{j_{m-r+1}} \geq  \prod_{r=0}^{N}\left(\limsup_{m\to\infty}\frac{j_{m-r+1}}{j_{m-r}}\right)^{-1}=(1+q_+)^{-(N+1)}.
\end{equation}
This shows that 
\begin{equation}\label{eq:alpha-jm-m+}
\limsup_{m\to\infty}\frac{|\alpha_\ell^{(m)}|}{(\restr{Q}{\CH_m})_{\ell,\ell}}\leq 1-(1+q_+)^{-(N+1)}\in [0,1).
\end{equation}
Similarly, for $m\leq -N$
\begin{equation}
\left| \frac{\alpha_\ell^{(m)}}{(\restr{Q}{\CH_m})_{\ell,\ell}}\right| = \frac{|(\restr{Q}{\CH_m})_{\ell,\ell}|-|j_{m+N}|}{|(\restr{Q}{\CH_m})_{\ell,\ell}|}=1-\frac{|j_{m+N}|}{|(\restr{Q}{\CH_m})_{\ell,\ell}|} \leq 1-\frac{j_{m+N}}{j_{m}},
\end{equation}
which shows that
\begin{equation}\label{eq:alpha-jm-m-}
\limsup_{m\to -\infty}\left| \frac{\alpha_\ell^{(m)}}{(\restr{Q}{\CH_m})_{\ell,\ell}}\right|\leq 1-(1+q_-)^{-N}\in [0,1).
\end{equation}
\eqref{eq:alpha-jm-m+} and \eqref{eq:alpha-jm-m-} show that, for all $\ell=1,\ldots,d_m$
\begin{equation}
\limsup_{|m| \to \infty}\left| \frac{\alpha_\ell^{(m)}}{(\restr{Q}{\CH_m})_{\ell,\ell}}\right| \leq 1-\left(1+\max\{q_-,q_+\}\right)^{-(N+1)}=:c_N\in[0,1).
\end{equation}
That is, for any given $\epsilon>0$, there exists an integer $K=K(\epsilon, N)<0$ such that for all $\ell=1,\ldots,d_m$,
\begin{equation}
|\alpha_\ell^{(m)}|<(\epsilon+c_N)|(\restr{Q}{\CH_m})_{\ell,\ell}| \text{ for all }|m|\geq K.
\end{equation}
Then for any $\phi\in D(Q)$ we have
\begin{align}\label{Relative-bound-tilde-Q}
\|(Q-\tilde{Q}_N)\phi\|^2&=\sum_{m\in\Z}\left\|P_m(Q-\tilde{Q}_N)\phi\right\|^2 \notag\\
&= \sum_{|m|<K} \sum_{\ell=1}^{d_m} |\alpha_\ell^{(m)}|^2 |(P_m\phi)_\ell|^2
+ \sum_{|m|\geq K} \sum_{\ell=1}^{d_m} |\alpha_\ell^{(m)}|^2 |(P_m\phi)_\ell|^2\notag\\
&\leq \sup_{\substack{|m|<K\\ \ell=1,\ldots,d_m}}|\alpha_\ell^{(m)}|^2 \ \|\phi\|^2 + (\epsilon+c_N)^2 \sum_{|m|\geq K} \sum_{\ell=1}^{d_m} |(\restr{Q}{\CH_m})_{\ell,\ell}|^2  |(P_m\phi)_\ell|^2
\end{align}
 Set
\begin{equation}
M:=M(\epsilon,N)=\sup_{\substack{m<|K|\\ \ell=1,\ldots,d_m}}|\alpha_\ell^{(m)}|,
\end{equation}
to see that \eqref{Relative-bound-tilde-Q} is bounded as
\begin{equation}
\|(Q-\tilde{Q}_N)\phi\|^2 \leq M^2\|\phi\|^2+(\epsilon+c_N)^2\|Q\phi\|^2.
\end{equation}
Since $c_N\in[0,1)$ and $\epsilon>0$ is arbitrary, then $(Q-\tilde{Q}_N)$ is $Q$-bounded with $Q$-bound less than or equal to $c_N$, where we recall that
\begin{equation}
c_N=1-\left(1+\max\{q_+,q_-\}\right)^{-(N+1)} <1.
\end{equation}
\begin{remark}\label{rem:Q-bound-0}
We note that if $\displaystyle \lim_{|m|\to\infty}g_m/|j_m|=0$, i.e., $q_\pm=0$, then $(Q-\tilde{Q}_N)$ is $Q$-bounded with relative bound zero. 
\end{remark}
\end{proof}

\section{Uniformly bounded and sublinearly growing gaps}\label{sec:zero-velocity-1-2}
In this section, we establish parts (\ref{Thm:case-1}) and (\ref{Thm:case-2}) of Theorem \ref{Thm:determ}. Our argument is based on the a priori velocity bound \eqref{eq:v-bound-general} from Theorem \ref{Thm:general}; for the convenience of the reader, we recall it here.
\begin{equation}
v(W)\leq \min_{k\in\{1,2\}}\inf_{(j_m)\in J}\inf_{N\in\N}\left(1+q\right)^{N+1}
\max\left\{
\sup_{m\geq N} g_{m-N}|a_k(j_{m+1})|,
\sup_{m\geq N} g_{-m+N}|a_k(j_{-m})|
\right\},
\end{equation}
where $J$ is the set of all indices satisfying
\begin{equation}
q=q_{(j_m)}=\max_{\pm}\limsup_{m\to\pm\infty}\frac{g_m}{|j_m|}<\infty.
\end{equation}

Suppose for a fixed $k\in\{1,2\}$, there is a strictly  increasing sequence $(j_m)_{m\in\Z}\subset\Z$ such that $a_k(j_m)\to 0$ as $|m|\to\infty$.

In the case of uniform boundedness of the gaps \eqref{cond-uniform-bound}, $g_m/j_m\to 0$ as $m\to\infty$, hence $(j_m)_{m\in\Z}\in J$, then it follows directly from Theorem \ref{Thm:general}, noting that $q=0$ (see Remark \ref{rem:Q-bound-0}), that
\begin{equation}
0\leq v(W)\leq \big(\sup_{m}g_m\big) \limsup_{|N|\to\infty}|a_k(j_N)| =0,
\end{equation}
which proves part (\ref{Thm:case-1}).

To prove part (\ref{Thm:case-2}), we start by defining $f_k^\pm(N)$ as follows.
\begin{equation}\label{eq:fN}
f_k^+(N):=\sup_{m\geq N}g_{m-N}|a_k(j_{m+1})| \text{ and }f_k^-(N):=\sup_{m\geq N}g_{-m-N}|a_k(j_{-m})|.
\end{equation}
In both cases, we will show that $f_k^+(N)\to 0$ as $N\to\infty$. An identical argument shows that $f_k^-(N)\to 0$ as $N\to\infty$. Hence, $\max\{f_k^+(N), f_k^-(N)\}\to 0$, and Theorem \ref{Thm:general} yields
\begin{equation}
v(W)\leq \min_{k\in\{1,2\}}\inf_{\substack{(j_m)\in J\\ g_m/|j_m|\to 0}}\inf_{N\in\mathbb{N}}\max\{f_k^+(N), f_k^-(N)\}=0.
\end{equation}
We also use the fact that $q=0$, see Remark \ref{rem:Q-bound-0}.

By the assumption $|a_k(j_{m})|=\mathcal{O}(j_m^{-1})$ in (\ref{Thm:case-2}), there exist constants $C>0$ and $m_0\in\N$ such that
\begin{equation}\label{eq:b-bound}
|a_k(j_{m+1})| \leq \frac{C}{j_{m+1}}, \quad \text{for all } m\geq m_0.
\end{equation}
Reindex with $r:=m-N\geq 0$ to see that
\begin{equation}\label{eq:fN-bound1}
f_k^+(N)\leq C \sup_{r\geq 0}\frac{g_r}{j_{N+r+1}} \text{ for all }N\geq m_0.
\end{equation}

Assume that 
\begin{equation}
\quad
\frac{g_m}{|j_m|}\xrightarrow[|m|\to\infty]{}0 \quad \text{ hence } q=0.
\end{equation}
Given $\varepsilon>0$. Since $g_r/j_r\to 0$, there exists $R=R(\epsilon)\in\N$ such that
\begin{equation}\label{eq:choose-R}
\frac{g_r}{j_r}\leq \varepsilon, \text{ for }r\geq R.
\end{equation}
That is, for $N\geq m_0$ we have
\begin{equation}\label{eq:comb}
f_k^+(N)
\leq C
\max\!\left\{
\max_{0\leq r<R}\frac{g_r}{j_{N+r+1}},
\ \sup_{r\geq R}\frac{g_r}{j_{N+r+1}}
\right\}.
\end{equation}
For $r\geq R$, by monotonicity of $(j_m)$ and \eqref{eq:choose-R}, we have
\begin{equation}\label{eq:tail}
\sup_{r\geq R}\frac{g_r}{j_{N+r+1}}
\le
\sup_{r\geq R}\frac{g_r}{j_r}
\leq \varepsilon.
\end{equation}
For the finitely many indices $0\leq r<R$, set
$\displaystyle\max_{0\leq r<R} g_r <\infty$. 
Since $j_m\to\infty$ as $m\to\infty$, there exists $N_1\in\N$ such that for all $N\geq N_1$
\begin{equation}\label{eq:finite}
\max_{0\leq r<R}\frac{g_r}{j_{N+r+1}}
\le
\frac{1}{{j_{N}}}\max_{0\leq r<R} g_r 
\leq \varepsilon
\end{equation}
where we used $j_{N+r+1}>j_N$.

Combining \eqref{eq:tail} and \eqref{eq:finite}, we obtain from \eqref{eq:comb}  that for $N\geq \max\{m_0,N_1\}$,
\begin{equation}
f_k^+(N)\leq C\varepsilon.
\end{equation}
Since $\varepsilon>0$ is arbitrary, we conclude $f_k^+(N)\to 0$ as $N\to\infty$, as desired.

\begin{remark}\label{rem:case-2.5}
We also obtain zero velocity if  we assume that there exists $(j_m)\in J$ such that
\begin{equation}
\limsup_{|m|\to\infty}\frac{g_m}{|j_m|}=c\in(0,\infty) \quad\text{ and }\quad |a(j_m)|=o(j_m^{-1}).
\end{equation}
We give the proof only for the positive direction of the indices, since the argument for the negative direction is identical.

For each $M$, define the set of indices $j^{(M)}_m:=j_{m+M}$. Since $\limsup$ is invariant under shifts,
\begin{equation}
\limsup_{m\to\infty}\frac{g^{(M)}_m}{j^{(M)}_m}
=\limsup_{m\to\infty}\frac{g_{m+M}}{j_{m+M}}=c,
\end{equation}
so each tail sequence $(j^{(M)}_m)$ still belongs to $J$.

Apply the general bound \eqref{eq:v-bound-general} with $N=1$ to the sequence $(j^{(M)}_m)$ gives
\begin{equation}
v(W)\le (1+q)^2\sup_{m\ge 0} g_{m+M-1}|a(j_{m+M+1})|.
\end{equation}
Taking the infimum over $M$ yields
\begin{equation}
v(W)\le (1+q)^2\inf_{M\ge 0}\sup_{k\geq M} g_{k-1}|a(j_{k+1})|
=(1+q)\limsup_{k\to\infty} g_k|a(j_{k+1})|
=0.
\end{equation}
The last step follows from observing that $g_k=j_{k+1}-j_k\leq j_{k+1}<j_{k+2}$, thus
\begin{equation}
g_{k-1} |a(j_{k+1})|\leq j_{k+1} |a(j_{k+1})|\to 0.
\end{equation}
Therefore $v(W)=0$.
\end{remark}

\section{Gap-weighted decay}\label{sec:pf:case3}

In this section we prove part (\ref{Thm:case-3}) of Theorem \ref{Thm:determ}.

Consider the Hilbert space decomposition \eqref{dec}, and recall that for $m\in\Z$, we define $P_m$ to be the orthogonal projection onto $\mathcal{H}_m$. For general subsets $\Lambda\subset\Z$, let $P_\Lambda$ be the orthogonal projection onto $\bigoplus_{m\in \Lambda}\CH_m$, i.e., 
\begin{equation}
P_\Lambda:=\sum_{m\in \Lambda}P_m.
\end{equation}

For any fixed $N\in\N$, we have
\begin{equation}\label{eq:case3-first}
v(W)\leq \sup_{\substack{\psi\in D(Q)\\ \|\psi\|=1}}\limsup_{t\to\infty}\frac{1}{t}\left(\left\|P_{(-N,N)}Q W^t \psi\right\|+\left\|P_{\Z\setminus(-N,N)}Q W^t \psi\right\|\right)
\end{equation}
Here $(-N,N)$ is short for $(-N,N)\cap\Z$.
Observe that, since $W$ is unitary and $\|\psi\|=1$,
\begin{equation}
\frac{1}{t}\left\|P_{(-N,N)}Q W^t \psi\right\|\leq \frac{1}{t}\left\|P_{(-N,N)}Q\right\| = \frac{\max\{j_N, |j_{-N}|\}}{t} \ \xrightarrow[t\to\infty]{} 0,
\end{equation}
and that $Q$ and $|Q|$ differ only by a unitary. Hence, back to \eqref{eq:case3-first}, we obtain
\begin{equation}\label{iii-2}
v(W)\leq \sup_{\substack{\psi\in D(Q)\\ \|\psi\|=1}}\limsup_{t\to\infty}\frac{1}{t}\left\|P_{\Z\setminus(-N,N)}|Q| W^t \psi\right\|
\end{equation}
Next consider the (positive) operator
\begin{equation}
\hat{Q}_N=\sum_{m=N}^\infty\left( j_{m+1}P_m +|j_{-m}|P_{-m}\right)
\end{equation}
that takes the largest value of $\restr{\left(|Q|\right)}{\CH_k}$ in each block with $m\in\Z\setminus(-N,N)$. Then  
it is clear that $P_{\Z\setminus(-N,N)}|Q|\leq \hat{Q}_N$, hence
\begin{equation}
\bigr\|P_{\Z\setminus(-N,N)}|Q| W^t\psi\bigr\|^2=\langle W^t\psi, P_{\Z\setminus(-N,N)}|Q|^2 W^t\psi\rangle\leq \langle W^t\psi, \hat{Q}^2_N W^t\psi\rangle=\bigr\|\hat{Q}_N W^t\psi\bigr\|^2
\end{equation}
Thus, from \eqref{iii-2}, we conclude
\begin{equation}\label{case3-1}
v(W)\leq \limsup_{t\to\infty}\frac{1}{t}\| [\hat{Q}_N, W^t]\|\leq \| [\hat{Q}_N, W]\|= \| [\hat{Q}_N, \CM]\| \text{ for all }N\in \mathbb{N}.
\end{equation}
Here we expanded the commutator as in \eqref{eq:com-expand}, then used $W\equiv \CL\CM$ and $\CL$ is block diagonal with respect to the decomposition \eqref{dec}, hence $[\hat{Q}_N, W]=\CL[\hat{Q}_N, \CM]$.
A direct calculation, using the relabeling \eqref{def:e-m-j}, shows that
\begin{align}
[\hat{Q}_N, \CM]=&
j_{N+1} \left(a_2(j_N) |e_{N|1}\rangle\langle e_{N-1|d_{N-1}}|- \overline{a_2(j_N)} |e_{N-1|d_{N-1}}\rangle\langle e_{N|1}|\right)+ \notag\\[0.1cm]
&+|j_{-N}|\left(\overline{a_2(j_{-N+1}}) |e_{-N|d_{-N}}\rangle\langle e_{-N+1|1}|- a_2(j_{-N+1}) |e_{-N+1|1}\rangle\langle e_{-N|d_{-N}}|\right)+ \notag\\
&+\sum_{m=N}^\infty g_{m+1}\left(a_2(j_{m+1})|e_{m+1|1}\rangle\langle e_{m|d_k}| \ - \ \overline{a_2(j_{m+1}})|e_{m|d_k}\rangle\langle e_{m+1|1}| \right)+\notag\\
& +  \sum_{m=N+1}^\infty g_{-m+1}\left(\overline{a_2(j_{-m+1}})|e_{-m|d_{-m}}\rangle\langle e_{-m+1|1}| - a_2(j_{-m+1})|e_{-m+1|1}\rangle\langle e_{-m|d_{-m}}|\right)
\end{align}
where we used the fact that $g_{-m+1}=j_{-m+1}-j_{-m}=|j_{-m}|-|j_{-m+1}|$ for $m>0$.
Then $[\hat{Q}_N, \CM]([\hat{Q}_N, \CM])^*$ is diagonal and given as
\begin{align}
=& j_{N+1}^2\ |a_2(j_N)|^2 \left( |e_{N|1}\rangle\langle e_{N|1}|+ |e_{N-1|d_{N-1}}\rangle\langle e_{N-1|d_{N-1}}|\right)+ \notag\\[0.1cm]
& +j_{-N}^2\ |a_2(j_{-N+1}) |^2\left(|e_{-N|d_{-N}}\rangle\langle |e_{-N|d_{-N}}| +  |e_{-N+1|1}\rangle\langle e_{-N+1|1}|\right)+ \notag\\
&+\sum_{m=N}^\infty g_{m+1}^2\ |a_2(j_{m+1})|^2\left(|e_{m+1|1}\rangle\langle |e_{m+1|1}| + |e_{m|d_k}\rangle\langle e_{m|d_k}| \right)+\notag\\
& +  \sum_{m=N+1}^\infty g_{-m+1}^2\ |a_2(j_{-m+1})|^2 \left(|e_{-m|d_{-m}}\rangle\langle e_{-m|d_{-m}}| + |e_{-m+1|1}\rangle\langle e_{-m+1|1}|\right).
\end{align}
Hence $\big\|[\hat{Q}_N, \CM]\big\|$ is 
\begin{equation}
= \max \left\{ j_{N+1}|a_2(j_N)|, |j_{-N} a_2(j_{-N+1})|,\sup_{m\geq N}g_{m+1} |a_2(j_{m+1})|, \sup_{m\geq N+1}g_{-m+1} |a_2(j_{-m+1})|\right\}
\end{equation}
Observe that $g_{m+1}\leq j_{m+2}$ and $g_{-m+1}\leq |j_{-m}|$. So we get
\begin{equation}
\big\|[\hat{Q}_N, \CM]\big\| = \max\left\{ \sup_{m\geq N}j_{m+1}|a_2(j_{m})|, \sup_{m\geq N}|j_{-m}||a_2(j_{-m+1})|\right\}
\end{equation}
Following from \eqref{case3-1}, we have
\begin{equation}\label{eq:weighted-gap-a2}
v(W)\leq \max\left\{\limsup_{m\to\infty}j_{m+1}|a_2(j_{m})|, \limsup_{m\to\infty}|j_{-(m+1)}||a_2(j_{-m})| \right\}
\end{equation}
Finally, for $m>0$, we have
\begin{equation}
j_{m+1}\lvert a_2(j_m)\rvert = (g_m + j_m)\lvert a_2(j_m)\rvert.
\end{equation}
Therefore, $v(W)=0$ if and only if
\begin{equation}
g_m\lvert a_2(j_m)\rvert \to 0
\qquad \text{and} \qquad
j_m\lvert a_2(j_m)\rvert \to 0.
\end{equation}
The same argument applies for negative values of $m$.  To prove \eqref{eq:weighted-gap-a2} in terms of $a_1$, we follow the same argument around \eqref{eq:LM-ML}.

This completes the proof of \eqref{Thm:case-3}.

\section{Proof of the random setting}\label{sec:pf-random}
In this section, we prove Corollary \ref{coro:random} using case (\ref{Thm:case-2}) of Theorem \ref{Thm:determ}. The key step is the following lemma, proved by means of a Borel--Cantelli argument. As explained in Remark \ref{rem:random-sharp-alpha-1}, the conclusion is essentially sharp in view of cases (\ref{Thm:case-1})--(\ref{Thm:case-3}) of Theorem \ref{Thm:determ}.
\begin{lemma}\label{lem:good-subsequence}
Let $(a(n))_{n\in\mathbb{Z}}$ be i.i.d.\ random variables from a probability space $(\Omega,\mathcal{B},\mathbb{P})$ 
satisfying
\begin{equation}\label{eq:random-assumption}
\mathbb{P}\left(\left\{\omega:\ |a(0;\omega)|\leq x\right\}\right) \geq c x^\alpha \quad \text{for all sufficiently small } x>0
\end{equation}
for some $c>0$ and $\alpha\in(0,1)$.
Then, almost surely, there exists a bi-infinite strictly increasing sequence
$(j_m)_{m\in\mathbb{Z}} \subset \mathbb{Z}$ such that
\begin{equation}\label{pf-random}
 \frac{g_m}{|j_m|} \xrightarrow[|m|\to\infty]{} 0 \text{\ and \ }|a(j_m)| \leq \frac{1}{|j_m|} \text{ for all }m\in\Z. 
\end{equation}
\end{lemma}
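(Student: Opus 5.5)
We would prove the lemma by a dyadic-block Borel--Cantelli argument, treating the positive and negative half-lines separately (they are independent and symmetric), and then splicing the two one-sided sequences together at the origin. Since the requirement $|a(j_m)|\le 1/|j_m|$ is imposed for \emph{all} $m$, including $j_m=0$ where it is vacuous only if we interpret $1/0=\infty$, we will in any case keep $j_0=0$ (or, alternatively, begin the sequence at a site where $|a|\le 1$, which holds trivially). Finitely many initial terms can be chosen freely: every site satisfies $|a(n)|\le 1$, and for small $|n|$ we may simply include the sites $n=\pm1$ with bound $1$. Only the asymptotic gap condition and the tail inequality $|a(j_m)|\le 1/|j_m|$ for large $|m|$ carry content.

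For the positive side, fix an exponent $\beta\in(\alpha,1)$ and partition $[1,\infty)$ into consecutive intervals $I_\ell$ whose lengths are comparable to $n^{\beta}$ when located near $n$. For instance, take $I_\ell=[\ell^{p},(\ell+1)^{p})\cap\Z$ with $p=1/(1-\beta)$, so that $|I_\ell|\approx p\,\ell^{p-1}\approx n^{\beta}$ for $n\in I_\ell$. Let $E_\ell$ be the event that no $n\in I_\ell$ satisfies $|a(n)|\le 1/(\ell+1)^p$. Every $n\in I_\ell$ has $1/(\ell+1)^p\le 1/n$, so on the complement of $E_\ell$ we obtain a site $n\in I_\ell$ with $|a(n)|\le 1/n$. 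By independence and \eqref{eq:random-assumption}, for large $\ell$,
\begin{equation*}
\mathbb{P}(E_\ell)\le \bigl(1-c(\ell+1)^{-p\alpha}\bigr)^{|I_\ell|}\le \exp\bigl(-c'\,\ell^{p-1-p\alpha}\bigr).
\end{equation*}
Since $p-1-p\alpha=p(\beta-\alpha)>0$, this is summable. Borel--Cantelli then shows that almost surely only finitely many $E_\ell$ occur. In each remaining block we select one good site. Consecutive selected sites lie in adjacent blocks, so the gaps satisfy $g_m\le |I_\ell|+|I_{\ell+1}|=O(j_m^{\beta})$, and therefore $g_m/j_m=O(j_m^{\beta-1})\to0$. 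For the finitely many exceptional blocks, we include no site, which only enlarges finitely many gaps and does not affect the limit.

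The main point to check carefully is the choice of the scale $\beta$. The blocks must be short enough ($\beta<1$) that gaps are sublinear, yet long enough ($\beta>\alpha$) that each block contains, with overwhelming probability, a site whose value falls below the threshold $\sim n^{-1}$. The hypothesis $\alpha<1$ is exactly what makes such a $\beta$ exist; this is the step where the assumption is used. The negative side is handled identically using $(a(-n))_{n\ge1}$. Intersecting the two almost-sure events and relabeling with $j_0=0$ gives the bi-infinite sequence, strictly increasing with $j_m\to\pm\infty$, satisfying \eqref{pf-random}. The gap convention \eqref{def:g} on each side agrees with the one-sided gaps we controlled.
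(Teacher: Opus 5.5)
Your proof is correct, but it takes a different route from the paper's.

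\textbf{The paper's route.} The paper works with the sequence of \emph{all} indices $k\ge1$ with $|a(k)|\le 1/k$. The second Borel--Cantelli lemma, using $\sum_k ck^{-\alpha}=\infty$, gives infinitely many such indices. Gaps are then controlled through the events $B_n=E_n\cap\bigcap_{k=n+1}^{n+h(n)}E_k^c$ with $h(n)=\lceil n^\alpha(\ln n)^2\rceil$. Each satisfies $\mathbb{P}(B_n)\le\exp(-\tfrac{c}{2}(\ln n)^2)$, which is summable, so the first Borel--Cantelli lemma gives $g_m\le h(j_m)$ eventually.

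\textbf{Your route.} You fix a deterministic partition into blocks of length $\asymp n^{\beta}$ with $\beta\in(\alpha,1)$, and use a block-uniform threshold $(\ell+1)^{-p}\le 1/n$. You then show that eventually every block contains a good site.

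\textbf{What each approach buys.} Your argument needs only the first Borel--Cantelli lemma. Independence is used only inside each block, to write $\mathbb{P}(E_\ell)$ as a product. The existence of infinitely many good sites comes for free, with no separate divergence argument. The cost is a slightly weaker gap bound: $O(j_m^{\beta})$ for a fixed $\beta>\alpha$, instead of $O(j_m^{\alpha}(\ln j_m)^2)$. This is irrelevant for the lemma, and you could recover the sharper bound by taking blocks of length $\asymp n^\alpha(\ln n)^2$.

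Your key computations check out:
\begin{itemize}
\item $p-1-p\alpha=p(\beta-\alpha)>0$, so $\mathbb{P}(E_\ell)$ is summable.
\item The gap estimate $g_m<(\ell+2)^p-\ell^p\lesssim \ell^{p-1}$ together with $j_m\ge \ell^p$ gives $g_m/j_m\to0$.
\item On the negative side, the convention $g_m=j_m-j_{m-1}$ divided by $|j_m|$ is again the gap divided by the inner endpoint, so your one-sided control transfers.
\end{itemize}

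\textbf{One cosmetic point.} Your remark that finitely many initial terms ``can be chosen freely'' is not literally true. The lemma asks for $|a(j_m)|\le 1/|j_m|$ for every $m$, and this is a real constraint when $|j_m|\ge 2$. Your construction only ever uses good sites, plus possibly $\pm1$, so nothing breaks. It is cleaner, though, to omit $0$ altogether (avoiding the $1/0$ issue) and simply concatenate the two one-sided sequences, as the paper implicitly does.
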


\begin{proof}
We present the proof for the positive indices. The argument for negative indices is identical, applied to the i.i.d.\ sequence
$(a(-n))_{n\ge1}$.

For $k\in\Z_+$, set 
\begin{equation}\label{def:E}
E_k := \{\omega:\ |a(k;\omega)| \leq 1/k \}.
\end{equation}
By \eqref{eq:random-assumption} and identical distribution, we have 
$\mathbb{P}\left(E_k\right) \geq c k^{-\alpha}$  for sufficiently large $n$.
Since $\alpha<1$, then
\begin{equation}
\sum_{k=1}^\infty \mathbb{P}(E_k)= \infty.
\end{equation}
The events $E_k$ are independent, hence by the second part of the
Borel-Cantelli lemma, 
\begin{equation}
\mathbb{P}(\Omega_E)=1, \text{ where }\Omega_E:=\{E_k \text{ occurs infinitely often}\}.
\end{equation}
Thus, on $\Omega_E$, there exists an increasing sequence $(j_m)_{m\ge1}$
such that $E_{j_m}$ holds for all $m$, i.e.,
\begin{equation}\label{pf:random-decay}
|a(j_m)| \leq \frac{1}{j_m} \qquad \text{for all }m\ge1.
\end{equation}

Next we show that the gaps $g_m=j_{m+1}-j_m$ satisfy the first conclusion of \eqref{pf-random} on the positive side.
For $n\geq 2$, consider the events 
consider the events
\begin{equation}\label{def:B}
B_n :=E_n\cap\bigcap_{k=n+1}^{n+h(n)}E_k^c, \text{\ where \ }h(n):=\bigl\lceil n^{\alpha}(\ln n)^2\bigl\rceil .
\end{equation}
i.e., $B_n$ is the event that $E_n$ occurs and all $E_{n+1},E_{n+2},\ldots, E_{n+h(n)}$ do not occur.

If $B_n$ occurs, then $n$ is a ``good index'' and the next good index
(when it exists) is at least $n+h(n)$, so the gap following $n$ is at
least $h(n)$.

We will show later that $\sum_{n=1}^\infty \mathbb{P}(B_n)$ converges.
Then by the first part of Borel-Cantelli, almost surely only finitely many
events $B_n$ occur, i.e.,
\begin{equation}
\mathbb{P}(\Omega_B):=1, \text{ where }\Omega_B:=\{B_n \text{ occurs only finitely often}\}.
\end{equation}
Now, fix $\omega\in \Omega_E\cap\Omega_B$ and let $(j_m)_{m\ge1}$ be the increasing sequence of all positive indices for which $E_{k}$
occurs. Then there exists $N\in\mathbb{N}$ such that for every $n\geq N$, if $E_n$ occurs, then $E_k$ also occurs with $k\in\{n+1,\ldots,n+h(n)\}$.
This implies that for all sufficiently large $m$ (so that $j_m\geq N$),
\begin{equation}
g_m \leq h(j_m).
\end{equation}
Dividing by $j_m$ and using $\alpha<1$ gives
\begin{equation}
\frac{g_m}{j_m}\leq \frac{j_m^{\alpha} (\ln j_m)^2+1}{j_m}
= j_m^{\alpha-1} (\ln j_m)^2 +\frac{1}{j_m}\xrightarrow[m\to\infty]{} 0.
\end{equation}
It remain to prove that $\sum_{n=1}^\infty \mathbb{P}(B_n)$ converges.
Because the events $E_k$ are independent, we have from \eqref{def:B}
\begin{equation}\label{pf:random-sum-B-1}
\mathbb{P}(B_n) =  \mathbb{P}(E_n) \prod_{k=n+1}^{n+h(n)} (1- \mathbb{P}(E_k)) \leq  
\exp\left(-\sum_{k=n+1}^{n+h(n)}  \mathbb{P}(E_k)\right)
\end{equation}
where we used $1-x\leq e^{-x}$ and $\mathbb{P}(E_n)\leq 1$ in the last step.

For  $k\in[n+1,n+h(n)]$ we have, by \eqref{eq:random-assumption}
\begin{equation}
 \mathbb{P}(E_k)\geq c k^{-\alpha}\geq c(n+h(n))^{-\alpha}.
\end{equation}
Note that $h(n)/n\to 0$ as $n\to\infty$. Hence for all sufficiently large $n$, $h(n)\leq n$, and  we obtain
\begin{equation}
 \mathbb{P}(E_k)\geq c k^{-\alpha}\geq c(2n)^{-\alpha}\geq \frac{1}{2}c n^{-\alpha}.
\end{equation}
Then we  obtain (for sufficiently large $n$)
\begin{equation}
\sum_{k=n+1}^{n+h(n)}  \mathbb{P}(E_k) \geq \frac{1}{2}c n^{-\alpha} \sum_{k=n+1}^{n+h(n)} 1= \frac{1}{2}c n^{-\alpha}h(n)= \frac{1}{2}c\ (\ln n)^2.
\end{equation}
Use this bound in \eqref{pf:random-sum-B-1} to obtain
\begin{equation}
\mathbb{P}(B_n)\leq \exp\left(-\frac{1}{2}c\ (\ln n)^2\right)=n^{-\frac12 c \ln n}\leq n^{-2}, \text{ for all sufficiently large }n,
\end{equation}
and the series $\sum_{n=1}^\infty \mathbb{P}(B_n)$ converges.

\end{proof}

\begin{remark}\label{rem:random-sharp-alpha-1}
The restriction $\alpha<1$ is essential. In general, the conclusion fails for
$\alpha=1$. For example, let $(a(n))_{n\in\mathbb Z}$ be i.i.d.\ uniform on
$[-1,1]$. Then
\begin{equation}\label{random:eq:rem1}
\mathbb P(|a(0)|\leq x)=x \qquad (0<x\leq 1),
\end{equation}
so \eqref{eq:random-assumption} holds with $\alpha=1$ and $c=1$. 
\begin{enumerate}[(A)]\itemsep0.5cm
\item
Writing
$E_k:=\{|a(k)|\leq 1/k\}$, we have by \eqref{random:eq:rem1}, $\mathbb P(E_k)=1/k$. Define
\begin{equation}
A_n:=\bigcap_{j=2^n}^{2^{n+1}-1} E_k^c.
\end{equation}
The events $A_n$ are independent and
\begin{equation}
\mathbb P(A_n)
=
\prod_{j=2^n}^{2^{n+1}-1}\left(1-\frac1k\right)
=\prod_{j=2^n}^{2^{n+1}-1}\frac{k-1}{k}=
\frac{2^n-1}{2^{n+1}-1}.
\end{equation}
Hence $\sum_n \mathbb P(A_n)=\infty$, so by Borel-Cantelli, almost surely
infinitely many blocks $[2^n,2^{n+1})$ contain no good index. It follows
that no increasing sequence of good indices can satisfy $g_m/j_m\to 0$ (because the indices are growing at least as $\sim 2^m$).

\item
More generally, the same example shows that the endpoint $\alpha=1$ cannot, in
general, be rescued by replacing the condition \eqref{pf-random} with
\begin{equation}
\lim_{m\to\infty}\frac{g_m}{j_m}\in(0,\infty), \text{ and }|a(j_m)|\leq f(j_m),
\end{equation}
where $f:\mathbb N\to(0,1]$ satisfies
\begin{equation}
f(n)=o(n^{-1}) \qquad \text{ as }n\to\infty.
\end{equation}

Indeed, let $(a(n))_{n\in\mathbb Z}$ be i.i.d.\ uniform on $[-1,1]$, and define
\begin{equation}
E_k:=\{|a(k)|\leq f(k)\}, \qquad k\ge1.
\end{equation}
By \eqref{random:eq:rem1}, we have $\mathbb P(E_k)=f(k)$.

Hence, if $\sum_{k=1}^\infty f(k)<\infty$,
then also $\sum_k \mathbb P(E_k)<\infty$, and so by the first
Borel--Cantelli lemma, almost surely only finitely many $E_k$ occur. In
particular, there is then no (infinite) subsequence $(j_m)\subset\Z$ satisfying $|a(j_m)|\leq f(j_m)$.

It therefore remains only to consider the case
$\sum_{k=1}^\infty f(k)=\sum_{k=1}^\infty \mathbb P(E_k)=\infty$.
Hence, by the second Borel-Cantelli lemma, almost surely $E_k$ occurs
infinitely often.

Now fix an integer $\ell\ge2$, and define the intervals
\begin{equation}
I_n^{(\ell)}:=[\ell^n,\ell^{n+1})\cap\mathbb N,
\qquad
A_n^{(\ell)}:=\bigcap_{k\in I_n^{(\ell)}} E_k^c.
\end{equation}
Since the blocks $I_n^{(\ell)}$ are
disjoint, then $\{A_n^{(\ell)}\}_n$ are independent events,  Moreover, 
\begin{align}
\mathbb{P}\bigl((A_n^{(\ell)})^c\bigr)
\leq& \sum_{k\in I_n^{(\ell)}} \mathbb P(E_k) = \sum_{k\in I_k^{(\ell)}} f(k) \notag\\
\leq& \bigl(\sup_{r\in I_n^{(\ell)}} r f(r)\bigr) \sum_{k\in I_n^{(\ell)}}\frac{1}{k} \notag\\
\leq&  \bigl(\sup_{r\geq \ell^n} r f(r)\bigr) \int_{\ell^n-1}^{\ell^{n+1}}\frac{dx}{x}  \notag\\
=& \bigl(\sup_{r\geq \ell^n} r f(r)\bigr)\left(\ln \ell+\ln\frac{\ell^n}{\ell^n-1}\right) \xrightarrow[n\to\infty]{} 0.
\end{align}
Here we used the fact that $f(n)=o(n^{-1})$, hence for each fixed $\ell\ge 2$ we have
\begin{equation}
\sup_{r\in I_n^{(\ell)}} r f(r)\leq \sup_{r\geq \ell^n} r f(r)\xrightarrow[n\to\infty]{}0.
\end{equation}
Thus $\mathbb P(A_n^{(\ell)})\to1$ as $n\to\infty$, and in  particular,
\begin{equation}
\sum_{n=1}^\infty \mathbb P(A_n^{(\ell)})=\infty.
\end{equation}
Since the events $A_n^{(\ell)}$ are independent, the second
Borel-Cantelli lemma implies that $A_n^{(\ell)}$ occurs infinitely often almost
surely.

For each integer $\ell\ge2$, let
\begin{equation}
\Omega_\ell:=\{A_n^{(\ell)} \text{ occurs infinitely often}\}.
\end{equation}
Then $\mathbb P(\Omega_\ell)=1$ for every $\ell\ge2$, and hence
\begin{equation}
\mathbb P(\Omega_*)=1 \text{ where }\Omega_*:=\bigcap_{\ell=2}^\infty \Omega_\ell.
\end{equation}

Now fix $\omega\in\Omega_*$ and suppose there exists an increasing sequence of
good indices $(j_m)$ such that
\begin{equation}\label{pf:random-3}
|a(j_m)|\leq f(j_m)
\qquad\text{and}\qquad
\frac{g_m}{j_m}\xrightarrow[m\to\infty]{}L\in(0,\infty).
\end{equation}
Choose an integer $M>1+L$. Since $A_n^{(M)}$ occurs infinitely often, there
are infinitely many $n$ for which the block $I_n^{(M)}$ contains no good index.
For all sufficiently large such $n$, there exists $m$ with
\begin{equation}
j_m<M^n<M^{n+1}\leq j_{m+1}.
\end{equation}
Hence
\begin{equation}
\frac{g_m}{j_m}
=
\frac{j_{m+1}-j_m}{j_m}
>
\frac{M^{n+1}-M^n}{M^n}
=
M-1
>
L
\end{equation}
for infinitely many $m$, contradicting our assumption in \eqref{pf:random-3} that 
$g_m/j_m\to L$.

\end{enumerate}
\end{remark}

\bibliographystyle{abbrvArXiv}
\bibliography{zero-velocity.bib}
\end{document}